\newtheorem{theorem}{Theorem}[section]
\newtheorem{cor}[theorem]{Corollary}
\newtheorem{lem}[theorem]{Lemma}
\newtheorem{prop}[theorem]{Proposition}
\newcommand{\Real}{\mathbb R}
\newcommand{\Natural}{\mathbb N}
\newcommand{\nin}{n \in\Natural}
\newcommand{\eps}{\varepsilon}
\newcommand{\such}{\mid}
\newcommand{\prob}{\mathbb{P}}
\newcommand{\qprob}{\mathbb{Q}}
\newcommand{\F}{\mathcal{F}}
\newcommand{\B}{\mathcal{B}}
\newcommand{\ud}{ d}
\newcommand{\blambda}{\lambda^*}
\newcommand{\bPi}{\Pi^*}
\newcommand{\bprob}{\prob^*}
\newcommand{\breta}{\eta^*}
\newcommand{\bell}{\ell^*}
\newcommand{\bV}{V^*}
\newcommand{\limn}{\lim_{n \to\infty}}
\newcommand{\limt}{\lim_{t \to\infty}}
\newcommand{\plim}{\prob\mbox{-} \lim}
\newcommand{\pliminft}{\prob\mbox{-} \liminft}
\newcommand{\liminft}{\liminf_{t \to\infty}}
\newcommand{\V}{\mathcal{V}}
\newcommand{\Lb}{\mathbb{L}}
\newcommand{\hE}{\hat{E}}
\newcommand{\indic}{\mathbb{I}}
\newcommand{\llloc}{\ll_{\mathrm{loc}}}
\newcommand{\reals}{\mathbb R}
\begin{document}
\begin{frontmatter}

\title{Robust maximization of asymptotic growth}
\runtitle{Robust asymptotic growth}

\begin{aug}
\author[A]{\fnms{Constantinos} \snm{Kardaras}\corref{}\thanksref{t1}\ead[label=e1]{kardaras@bu.edu}}
\and
\author[B]{\fnms{Scott} \snm{Robertson}\ead[label=e2]{scottrob@andrew.cmu.edu}}
\runauthor{C. Kardaras and S. Robertson}
\affiliation{Boston University and Carnegie Mellon University}
\address[A]{Department of Mathematics and Statistics\\
Boston University\\
111 Cummington Street\\
Boston, Massachusetts 02215\\
USA\\
\printead{e1}}
\address[B]{Department of Mathematical Sciences\\
Carnegie Mellon University\\
Wean Hall 6113\\
Pittsburgh, Pennsylvania 15213\\
USA\\
\printead{e2}} 
\end{aug}

\thankstext{t1}{Supported in part by the NSF Grant DMS-09-08461.}

\received{\smonth{6} \syear{2010}}
\revised{\smonth{7} \syear{2011}}

%
\begin{abstract}
This paper addresses the question of how to invest in a robust
growth-optimal way in a market where the instantaneous expected return
of the underlying process is unknown. The optimal investment strategy
is identified using a generalized version of the principal
eigenfunction for an elliptic second-order differential operator, which
depends on the covariance structure of the underlying process used for
investing. The robust growth-optimal strategy can also be seen as a
limit, as the terminal date goes to infinity, of optimal arbitrages in
the terminology of Fernholz and Karatzas [\textit{Ann. Appl. Probab.}
\textbf{20} (2010) 1179--1204].
\end{abstract}

%
\begin{keyword}[class=AMS]
\kwd{60G44}
\kwd{60G46}
\kwd{60H05}.
\end{keyword}
\begin{keyword}
\kwd{Asymptotic growth rate}
\kwd{robustness}
\kwd{generalized martingale problem}
\kwd{optimal arbitrage}.
\end{keyword}

\end{frontmatter}

\section*{Discussion} 

This paper addresses the question of how to invest optimally in a
market when the financial planning horizon is long, and the dynamics of
the underlying assets are uncertain. For long time-horizons, it is
reasonable to question whether fixed parameter estimation, especially for
drift rates, remain valid. Therefore, determining a robust way to invest
across potential model misidentifications is desirable, if not
indispensable.

On the canonical space of continuous functions from $[0,\infty)$ to
$\Real^d$, let~$X$ denote
the coordinate mapping, which should be thought as representing the (relative)
price of certain underlying assets, discounted by some baseline wealth
process. It is assumed that there exists a probability $\qprob$ under
which~$X$ has dynamics of the form $\ud X_t = \sigma(X_t)\,\ud W^\qprob_t$, where $c
:=
\sigma\sigma'$ represents the instantaneous covariance matrix, and
$W^\qprob$
is a standard Brownian motion under $\qprob$. The significance of the local
martingale probability $\qprob$ lies in that it acts as a ``dominating''
measure used to form a class of probabilities $\Pi$, out of which an unknown
representative is supposed to capture the true dynamics of the process. The
class $\Pi$ is built by exactly all probabilities satisfying the
following two
conditions:
\begin{itemize}
\item First, under $\prob\in\Pi$ the coordinate mapping~$X$
stays in an open and connected subset $E \subseteq\reals^d$.
Qualitatively, if~$X$ represents either asset prices or relative
capitalizations, this condition asserts that assets
should not cease to exist over the time horizon.
\item Second, for $t\geq0$, each $\prob\in\Pi$ is absolutely continuous
with respect to $\qprob$ on $\sigma(X_s,0\leq s\leq t)$. This last
fact implies that the volatility process of~$X$ under each $\prob\in
\Pi$
is the same; even though model misidentification is possible, the allowable
models are not permitted to be wildly inconsistent with one another.
\end{itemize}
Note that the family $\Pi$ as described above does not necessarily
induce any
ergodic or stability property of the assets, although it certainly contains
all such models; in particular, models where the assets display transient
behavior are allowable. Furthermore, it is \textit{not} assumed that
$\qprob\in\Pi$. Indeed, it is often the case that~$X$ ``explodes'' under
$\qprob$; more precisely, with $\zeta$ denoting the first exit time
of~$X$
from~$E$, $\qprob[\zeta< \infty] > 0$ is allowed.

There are good reasons to let the class of models be defined in the
above way. While the covariance structure given by the function $c$ is
easy to assess, the returns process of~$X$ under the ``true''
probability is statistically impossible to estimate in
practice.\setcounter{footnote}{1}\footnote{Actually, under continuous-time observations,
perfect estimation of $c$ is possible. More realistically,
high-frequency data give good estimators for $c$. In contrast, consider
a one-dimensional model for an asset-price of the form $\ud X_t / X_t =
b \,\ud t + 0.2 \,\ud W_t$, where $b \in\Real$---note that $\sigma= 0.2$ is
considered a ``typical'' value for annualized volatility. Given
observations $(X_t)_{t \in[0, T]}$, where $T > 0$, the best linear
unbiased estimator for $b$ is $\hat{b}_T :=(1 / T) \log(X_T /
X_0)$. Easy calculations show that in order for $|\hat{b}_T - b|
\leq0.01$ to happen with probability at least $95\%$, one needs $T
\approx1600$ (in years). This simple exercise demonstrates the
futility of attempting to estimate drifts.}

Given that the underlying dynamics are only specified within a range
of models $\prob\in\Pi$, a natural question is to find a reasonable criterion
for ``optimal investment in~$X$.'' Here, optimal investment is defined
as a wealth process which ensures the largest possible worst-case (with
respect to the whole class of models) asymptotic growth rate. Given the
set $\V$ of all possible positive stochastic integrals against~$X$
starting from unit initial capital, the asymptotic growth rate of $V
\in\V$ under $\prob\in\Pi$ is defined as the largest~\mbox{$\gamma\in
\Real_+$} such that $\lim_{t\uparrow\infty} \prob[(1/ t)\log
V_t\geq\gamma] = 1$ holds. (An alternative definition of asymptotic
growth rate via almost-sure limits is also considered in the paper.)
With this definition, the investor seeks
to find a wealth process in~$\V$ that achieves maximal growth rate
uniformly over all possible models in $\Pi$, or at least in a~large
enough suitable subclass of $\Pi$ that covers all ``nonpathological''
cases. 

The solution to the above problem is given in terms of a generalized
version of the principal eigenvalue-eigenvector pair $(\blambda,
\breta
)$ of the eigenvalue equation
%
%
\begin{equation}\label{Eeigenvalueproblem}
\frac{1}{2} \sum_{i,j=1}^{d}c_{i,j}(x)\,\frac{\partial^2 \eta
}{\partial
x_i\,\partial x_j} (x) = -\lambda\eta(x), \qquad  x \in E.
\end{equation}
More precisely, the main result of Section~\ref{Sminmaxresult} states
that, when restricted to a large sub-class $\bPi$ of $\Pi$, $\lambda^*$
is the maximal growth rate, and the process $V \in\V$ defined via $V_t
= e^{\lambda^* t} \eta^*(X_t)$ achieves this maximal growth rate. There
are, of course, technicalities on an analytical level arising from the
use of the eigenvalue equation (\ref{Eeigenvalueproblem}), since it is
unreasonable in the present setting to assume either that $c$ is
uniformly positive definite on $E$ or that $E$ is bounded with smooth
boundary. [Consider, e.g., the case where~$X$ represents the prices of $d$
assets. In this instance $E=(0,\infty)^d$, which is unbounded with
corners. Furthermore, once the stock price goes to zero, it remains
stuck there. Thus, the covariance matrix $c$ degenerates along the
boundary of $E$ and hence cannot be both continuous and uniformly
elliptic.] In order to allow for degenerate $c$ and unbounded $E$ with
nonsmooth boundary, but still retain some tractability in the
problem, it is assumed that $E$ can be ``filled up'' by bounded
subregions with smooth boundary and that $c$ is continuous and
pointwise strictly positive definite. Under this assumption,~\cite
{MR1326606}, Chapter~4,
gives a detailed account of eigenvalue equations of the form~(\ref{Eeigenvalueproblem}).

Growth-optimal trading in the face of model uncertainty has been
investigated by other authors. One strand of research considers the
case where asset returns are assumed stationary and ergodic. In
\cite{MR929084}, asymptotically growth-optimal trading strategies
based upon historical data are constructed. There have been a number
of follow-up papers on this topic; see~\cite{MR1159579},
\cite{MR2319423} and the references cited within. In contrast to the
aforementioned approach, knowledge of the entire past is not
required in this paper. In fact, the optimal strategy is only based
on the current level of~$X$ and is, therefore, closely-related to
the idea of functionally-generated portfolios studied in
\cite{MR1861997}. Furthermore, it is also not assumed here that~$X$
represents asset returns; in fact, the primary example is when~$X$
are relative capitalizations, and not asset returns.
In this setting, stationarity of the relative capitalizations does
not automatically transfer to stationarity of returns.

The concept of robust growth optimality is also related to that of
robust utility optimization, the idea of which dates back to
\cite{gilboa1989meu} and is considered in detail in
\cite{MR2211122,MR2247836,MR2096294,MR2284014} and
\cite{MR2236457},  amongst others. (There is also
recent literature on optimal stopping under model ambiguity---see,
e.g.,~\cite{BayKarYao10}.) Though this paper differs from those
mentioned in
not considering penalty functions and by focusing on growth rather than
general utility functions, the growth-optimal strategy provides a ``good''
long-term robust optimal strategy for general utility functions due to the
exponential increase in terminal wealth as time progresses. Two recent papers
which are close in spirit to the present paper are~\cite{Knipsel1} and
\cite{Knipsel2}. Reference~\cite{Knipsel1} considers long-run robust utility
maximization in the case of model uncertainty for power and logarithmic
utility, and~\cite{Knipsel2} addresses the problem of finding wealth processes
that minimize long-term downside risk. The precise manner in which the class
of models is defined in these papers can only be identified up to a
(stochastic) affine perturbation away from a fixed model. This paper
differs from the above two in that, to the extent that
underlying economic factors affect the asset dynamics, it is only
through the
drift of~$X$. Furthermore, there is no a priori fixed model from which all
other models are recovered via perturbations. This enables the class of models
to be determined by qualitative properties, without additional technical
restrictions. However, here, as well as in~\cite{Knipsel2}, there is a
fundamental PDE, playing the role of an ergodic Bellman equation, which
governs the robust trading strategies.

The problem of constructing robust growth-optimal strategies can be extended
to the case where even the covariance matrix $c$ is not known
precisely, but
rather assumed to belong to a class of admissible matrices $\mathcal{C}$.
Such a situation has been studied in~\cite{KarFern10c}, in the setting of
optimal arbitrage mentioned below. In such a setting, one does not even assume
the existence of a dominating probability $\qprob$, and the
probabilities in
$\prob$ can be mutually singular. It is left for future research to establish
a natural definition of an ``extremely'' robust growth-optimal trading
strategy in terms of sub-solutions of (\ref{Eeigenvalueproblem}) which are
uniform over $\mathcal{C}$.

A second goal of the present paper is to relate robust growth-optimal
trading strategies to optimal arbitrages, as considered in
\cite{KarFern10a}. Optimal arbitrages are trading strategies
designed to outperform the benchmark process used for discounting
almost surely over a given time horizon. In~\cite{KarFern10a}, it was
shown that, under certain assumptions, the existence of an optimal
arbitrage on a finite time horizon $[0, T]$, $T \in\Real_+$, is
equivalent to $\qprob[\zeta\leq T] > 0$ (positive probability of
explosion of the coordinate process under $\qprob$ before~$T$), when~$E$
is the simplex in $\reals^d$. In fact, optimal arbitrages are
naturally expressed in terms of (conditional) tails of the distribution
of $\zeta$ under
$\qprob$.

For a fixed $T>0$, denote by $(V_t^T)_{t \in[0, T]}$ the optimal
arbitrage in the interval $[0, T]$. The robust growth-optimal wealth
processes $(V_t)_{t\in\Real+}$ considered here can be regarded as a
long-term limit of the optimal arbitrages; this is a topic taken up in
Section~\ref{Srelarb}. A better understanding of this connection
requires exploring a particular probability $\prob^*$, under which~$X$
has dynamics of the form $\ud X_t = (c(X_t) \nabla \log\breta(X_t)) \,\ud t +
\sigma(X_t)\,\ud W^{\prob^*}_t$ for $t \in \Real _+$, where~$W^{\prob^*}$ is
a standard Brownian motion under $\prob^*$. Loosely speaking,
ergodicity of~$X$ under $\bprob$ implies that on any compact time
interval $[0,\tau]$ the collection of processes
$((V^T_t)_{t\in[0,\tau]})_{T\in\Real+}$ converges to the robust
growth-optimal wealth process $(V_t)_{t\in[0,\tau]}$ as the horizon $T$
becomes large. This is part of the reason why Section~\ref{Sbprob} is
devoted to investigating the properties of~$X$ under $\bprob$. An
application of ergodic results for unbounded functions from \cite
{MR1152459}, coupled with powerful probabilistic arguments, allows us
to show the aforementioned convergence of optimal arbitrages to the
robust growth-optimal one. Furthermore, convergence of the
probabilities $\qprob[ \cdot\such\zeta> T]$ to $\bprob$ on
$\mathcal{F}_\tau$ as $T\uparrow\infty$ in the total-variation norm is
established. This extends results on diffusions conditioned to remain
in a bounded region, first obtained in~\cite{MR781410}, to regions with
nonsmooth boundaries where the matrix $c$ need not be uniformly
positive definite, and where the process~$X$ under $\qprob$ need not be
$m$-reversing for any measure~$m$.

In the special one-dimensional case, considered in Section \ref
{Sone-dim-case}, simple tests for transience and recurrence of
diffusions are readily available. This allows us to provide tight
conditions upon $c$ in the case of a bounded interval, in which
$\blambda= 0$ or $\blambda> 0$, and characterize both the nature of
$\breta$ and of $\bprob$. The main message is essentially the
following: if
$X$ can explode to both endpoints under $\qprob$, then everything
works out nicely, in the sense that $\blambda> 0$ and~$X$ is
positive recurrent under~$\bprob$. The technical proof of this result
relies heavily on singular Sturm--Liouville theory and is given in
Section~\ref{Sonedimproofs}.

Finally, Section~\ref{Sexamples} provides examples that illustrate the
results obtained in previous sections. In contrast to the case where
$c$ is uniformly positive definite on $E$, multi-dimensional examples
where the function $\breta$ does not vanish on the boundary of $E$,
even if $E$ is bounded, are given.

\section{The set-up} 

Consider an open and connected set $E \subseteq\Real^d$ and a function
$c$ mapping $E$ to the space of $d\times d$ matrices. For
$\alpha\in(0,1]$, recall that a~function $f\dvtx E\mapsto\reals$ is
called \textit{locally $C^{2,\alpha}$ on $E$} if for all bounded, open,
connected $D\subset E$ such that $\bar{D}\subset E$ it follows that
$f\in C^{2,\alpha}(\bar{D})$. For a definition of the H\"{o}lder
space~$C^{2,\alpha}$, see~\cite{MR1625845}, Chapter 5.1. The following
assumptions will be in force throughout.
%
%
\begin{ass} \label{assbasic}
For each $x\in E$, $c(x)$ is a symmetric and strictly positive definite
$d \times d$ matrix. For $1\leq i,j\leq
d$, $c_{ij}(x)$ is locally $C^{2,\alpha}$ on $E$ for
some $\alpha\in(0, 1]$. Furthermore, there exists a sequence
$(E_n)_{\nin}$
of bounded open connected subsets of $E$ such that each boundary
$\partial E_n$ is $C^{2, \alpha}$, $\bar{E}_n\subset E_{n+1}$ for
$\nin
$ and $E = \bigcup_{n=1}^{\infty} E_n$.
\end{ass}

\subsection{The generalized martingale problem on $E$} %

It will now be discussed how Assumption~\ref{assbasic} implies
the existence of a unique solution to the generalized martingale
problem on
$E$ for the operator $L$ which acts on $f\in C^2(E)$ via
%
%
\begin{equation}\label{Eoperatordef}
(Lf)(x) = \frac{1}{2}\sum_{i,j=1}^{d} c_{ij}(x)\,\frac{\partial
^2f}{\partial
x_i\,\partial x_j}(x), \qquad  x \in E.
\end{equation}
%

Let $\hE= E \cup\triangle$ be the one-point compactification of
$E$; the point $\triangle$ is identified with $\partial E$ if $E$
is bounded and with $\partial E $ plus the point at $\infty$ if $E$
is unbounded. Let $C ( \Real_+,\hE)$ be the space of
continuous functions from $[0,\infty)$ to $\hE$. For $\omega\in
C(\Real_{+}, \hE)$, define the exit times
\begin{eqnarray*}
\zeta_n(\omega) &:=& \inf\{t \in\Real_+ \such\omega_t
\notin E_n\},\\
\zeta(\omega) &:=& \lim_{n\uparrow\infty} \zeta_n(\omega).
\end{eqnarray*}
Then define
\[
\Omega= \{\omega\in C(\Real_{+},\hE) \such
\omega_{\zeta+ t} = \triangle\mbox{ for all } t \in\Real_+ \mbox{
if } \zeta(\omega) < \infty\}.
\]

Let $X = (X_t)_{t \in\Real_+}$ be the
coordinate mapping process for $\omega\in\Omega$.
Set $\B= (\B_t)_{t \in\Real_+}$ to be the natural filtration
of~$X$. It follows that the smallest $\sigma$-algebra that is generated
by $\bigcup_{t \in\Real_+} \B_t$, denoted by $\B_\infty$, is actually
the Borel $\sigma$-algebra on~$\Omega$. Furthermore, $\B_\infty$ is
also the smallest $\sigma$-algebra that is generated by $\bigcup
_{\nin
}\B_{\zeta_n}$, since paths in $\Omega$ stay in $\triangle$ upon arrival.


A solution to the generalized martingale problem on $E$ is a family
of probability measures $(\qprob_x)_{x\in\hE}$ such
that $\qprob_{x}[X_0 = x] = 1$ and
\[
f(X_{t\wedge\zeta_n}) - \int_0^{t\wedge\zeta_n}(Lf)(X_s)\,ds
\]
is a $(\Omega, (\B_t)_{t \in\Real_+},\qprob_{x})$-martingale
for all $\nin$ and all $f\in C^{2}(E)$ with $Lf$ given as in (\ref
{Eoperatordef}).

Assumption~\ref{assbasic} ensures a solution to the generalized martingale
problem, as the following proposition, taken from
\cite{MR1326606}, Theorem 1.13.1, shows.
%
%
\begin{prop} 
Under Assumption~\ref{assbasic}, there is a unique solution
$(\qprob_x)_{x\in\hE}$ to the generalized martingale
problem on $E$. The family
$(\qprob_x)_{x\in\hE}$ possesses the strong Markov
property.
\end{prop}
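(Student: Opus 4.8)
The plan is to construct the family $\left(\qprob_x\right)_{x\in\hE}$ by localizing along the exhaustion $(E_n)_{\nin}$ of Assumption \ref{ass: basic}, patching the pieces together via a projective-limit argument on path space, and then deducing both uniqueness and the strong Markov property from the well-posedness of the associated \emph{stopped} martingale problems on the bounded regions.

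\emph{Local construction.} Fix $\nin$. Since $\bar E_n$ is compact and $c$ is continuous and pointwise strictly positive definite there, $c$ is bounded and uniformly elliptic on $\bar E_n$; extend $c|_{\bar E_n}$ to a bounded, uniformly elliptic, globally $\alpha$-H\"older matrix-valued map $\tilde c_n$ on $\Real^d$. By the classical theory of Stroock and Varadhan, the martingale problem for $f \mapsto \frac{1}{2}\sum_{i,j=1}^d (\tilde c_n)_{ij}\frac{\partial^2 f}{\partial x_i \partial x_j}$ is well posed on $C(\Real_+, \Real^d)$, with solution a strong Markov family $(\qprob^n_x)_{x \in \Real^d}$. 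Stopping these measures at $\zeta_n$ and pushing forward under the (measurable) map that freezes a path at $\triangle$ after time $\zeta_n$ yields probabilities on $(\Omega, \B_{\zeta_n})$, still denoted $\qprob^n_x$, which solve — and, by well-posedness of the global problem, are the unique solutions of — the ``stopped at $\zeta_n$'' martingale problem for $L$ on $E_n$. Only the values of $c$ on $E_n$ enter, so $\qprob^n_x$ does not depend on the chosen extension.

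\emph{Consistency, extension, and verification.} Because $\bar E_n \subset E_{n+1}$, applying the strong Markov property of $\qprob^{n+1}_x$ at $\zeta_n$ together with the uniqueness just obtained shows that the restriction of $\qprob^{n+1}_x$ to $\B_{\zeta_n}$ coincides with $\qprob^n_x$. As noted in the text, the $\B_{\zeta_n}$ increase to $\B_\infty$ (paths remain at $\triangle$ after $\zeta$), and $\Omega$ is a Borel subset of the Polish space $C(\Real_+, \hE)$; hence the consistent family $(\qprob^n_x)_{\nin}$ extends uniquely to a probability $\qprob_x$ on $(\Omega, \B_\infty)$ with $\qprob_x|_{\B_{\zeta_n}} = \qprob^n_x$ for all $\nin$ (and $\qprob_\triangle$ is taken to be the unit mass on the constant path at $\triangle$). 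Then $\qprob_x[X_0 = x] = 1$, and for $f \in C^2(E)$ the process $f(X_{t\wedge\zeta_n}) - \int_0^{t\wedge\zeta_n}(Lf)(X_s)\,ds$ is a $\qprob^n_x$-, hence $\qprob_x$-, martingale for $(\B_t)_{t\in\Real_+}$; so $\left(\qprob_x\right)_{x\in\hE}$ solves the generalized martingale problem. Uniqueness is immediate: any solution $(\widetilde\qprob_x)$, stopped at $\zeta_n$, solves the stopped problem on $E_n$ and so equals $\qprob^n_x$ on $\B_{\zeta_n}$; since $\bigcup_n \B_{\zeta_n}$ generates $\B_\infty$, $\widetilde\qprob_x = \qprob_x$. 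For the strong Markov property one runs the usual Stroock--Varadhan argument in localized form: for a $(\B_t)$-stopping time $\tau$, the regular conditional distribution of the shifted path given $\B_\tau$ solves, off $\{X_\tau = \triangle\}$, the generalized martingale problem from $X_\tau$ — checked first after the extra stopping at $\zeta_n$ using the strong Markov property of $\qprob^n_x$, then passed to the limit in $n$ — and uniqueness identifies it with $\qprob_{X_\tau}$.

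The conceptual core, and the only place the geometric hypotheses on $(E_n)$ and the non-degeneracy of $c$ are genuinely used, is the local construction: reducing a possibly degenerate, unbounded-domain ``generalized'' problem to a sequence of bona fide well-posed bounded uniformly elliptic problems, and transferring well-posedness to the stopped problem on each $E_n$. I expect the extension/consistency bookkeeping to require the most care in writing — one must verify measurability of the freezing maps, the compatibility of the stopped measures across levels, and the legitimacy of the path-space extension, the last of which is exactly what the absorption convention defining $\Omega$ (which also collapses the explosion event $\{\zeta < \infty\}$ onto the single cemetery state $\triangle$) is designed to make clean.
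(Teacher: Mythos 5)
The paper offers no proof of this proposition; it is quoted verbatim from \cite[Theorem 1.13.1]{MR1326606}, whose proof is precisely the Stroock--Varadhan ``explosion'' construction you describe: well-posedness of the stopped martingale problems on the uniformly elliptic bounded pieces $E_n$, consistency and projective extension to $\B_\infty$ using the absorption at $\triangle$, and uniqueness plus the strong Markov property via regular conditional distributions and the well-posedness of the localized problems. Your argument is correct and follows essentially the same route as the cited source.
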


Set $(\F_t)_{t \in\Real_+}$ to be the right-continuous
enlargement of $(\B_t)_{t\in\Real_{+}}$. Furthermore, with $\F$
denoting the smallest $\sigma$-algebra that contains $\bigcup_{t \in
\Real_+} \F_t$, we have $\F= \B_{\infty}$. Assumption
\ref{assbasic}
implies that
\[
f(X_{t\wedge\zeta_n}) - \int_0^{t\wedge\zeta_n}(Lf)(X_s)\,ds
\]
is a $(\Omega, (\F_t)_{t \in\Real_+} ,\qprob_{x}
)$-martingale for all
$n = 1,2,3,\ldots$ and $f\in C^{2}(E)$ since~$f$ and $Lf$ are bounded
on each $E_n$. By setting $f(x) = x^{i}, i=1,\ldots,d$, and $f(x) =
x^ix^j, i,j=1,\ldots,d$, it follows that, for each $n$ and each $x\in
\hE$,\break $X_{t\wedge\zeta_n}$~is~a~$(\Omega, (\F_t)_{t \in \Real_+}
,\qprob_{x})$-martingale with quadratic covariation
process\break $ \int_0^{\cdot} \indic_{\{t \leq\zeta_n\}}
c(X_t) \,\ud t$.

\subsection{Asymptotic growth rate}

For a fixed $x_0\in E$, set $\qprob= \qprob_{x_0}$. In the sequel,
whenever there is no subscript associated to the probabilities, it will
be tacitly assumed that they only charge the event $\{X_0 =
x_0\}$.\vadjust{\goodbreak}

Denote by $\Pi$
the class of probabilities on $(\Omega,\F)$ which are \textit{locally
absolutely continuous} with respect to $\qprob$ (written $\prob
\llloc\qprob$) and for which the coordinate process~$X$ does not
explode, that is,
%
%
\begin{equation}\label{EPidef}\qquad
\Pi= \bigl[P\in M_1(\Omega,\F) \dvtx\prob|_{\F_t} \ll
\qprob|_{\F_t} \mbox{ for all } t\geq0 \mbox{ and }\prob
[\zeta< \infty] = 0\bigr].
\end{equation}
For each $\prob\in\Pi$,~$X$ is a $(\Omega, (\F_t)_{t \in
\Real_+},
\prob)$-semimartingale such that $\prob[X\in C
(\Real_{+}, E) ] = 1$. Therefore,~$X$ admits the representation
\[
X = x_0 + \int_0^\cdot
b^\prob_t \,\ud t + \int_0^\cdot\sigma(X_t) \,\ud W^\prob_t,
\]
where $W^\prob$ is a standard $d$-dimensional Brownian motion on
$
(\Omega, (\F_t)_{t \in\Real_+}, \prob)$, $\sigma$~is the unique
symmetric strictly positive definite square root of $c$ and $b^\prob$
is a
$d$-dimensional $(\F_t)_{t \in\Real_+}$-progressively measurable process.

Let $(\xi_t)_{t\in\Real_+}$ be an adapted process. For $\prob\in
\Pi$, define
\[
\pliminft\xi_t :=\mathop{\operatorname{ess}\operatorname
{sup}}_{\prob}\Bigl\{ \chi\mbox{ is }
\F\mbox{-measurable}\bigm|\limt\prob[\xi_t \geq\chi] = 1\Bigr\}.
\]
If, in addition, $\prob[\xi_t > 0] = 1$ for each $t \in
\Real
_+$, let
\[
g(\xi; \prob) :=\sup\Bigl\{\gamma\in\Real\bigm|\pliminft
(t^{-1}\log\xi_t) \geq\gamma, \prob\mbox{-a.s.}
\Bigr\}
\]
be the \textit{asymptotic growth rate of $\xi$ under $\prob$}. Since
$\prob\in\Pi$ and
$\qprob$ are not necessarily equivalent on $\F$, $g(\xi; \prob)$
indeed depends
on $\prob\in\Pi$. The following result, the proof of which is
straightforward and hence omitted, provides an alternative
representation for $g(\xi;\prob)$.
%
%
\begin{lem}\label{Ltwogrowthsareone}
For a given $\prob\in\Pi$ and an adapted real-valued process $(\xi
_t)_{t\in\Real_{+}}$ such that $
\prob[\xi_t > 0] = 1$ for all $t \in\Real_+$,
\[
g(\xi;\prob) = \sup\Bigl\{\gamma\in\Real\bigm|\limt\prob
[t^{-1}\log\xi_t\geq\gamma]=1\Bigr\}.
\]
\end{lem}

%
%

\subsection{The problem}
The basic object in our study will be the class of all possible
nonnegative wealth processes that one can achieve by investing in the
$d$ assets whose price processes are modeled via~$X$. Whenever
$\vartheta$ is a $d$-dimensional predictable process, that is,
$X$-integrable under $\qprob$ (and, as a consequence,~$X$-integrable
under any $\prob\in\Pi$, as $\prob\llloc\qprob$), define the
process $V^{\vartheta} = 1 + \int_0^\cdot\vartheta_t' \,\ud X_t$, where
the prime symbol ($'$) denotes transposition throughout the text. Then
let $\V$ denote the class of all processes $V^\vartheta$ of the
previous form, where we additionally have $V^\vartheta\geq0$ up to
$\qprob$-evanescent sets. (Of course, $V^\vartheta\geq0$ also holds
up to $\prob$-evanescent sets for all $\prob\in\Pi$.) Naturally,
$\vartheta$ represents the position that an investor takes on the
assets whose discounted price-processes are given by~$X$, and
$V^\vartheta$ represents the resulting wealth from trading starting
from unit capital, constrained not to go negative at any time.\vadjust{\goodbreak}

The problem considered is to calculate
%
%
\begin{equation}\label{Erobustgrowthval} \sup_{V \in\V}
\inf_{\prob\in\Pi} g(V; \prob)
\end{equation}
and to find $\bV\in\V$ that attains this value, at least for all
$\prob$ in
a large sub-class of $\Pi$ that will be soon defined. To this
end, for a given $\lambda\in\Real$ and $L$ as in (\ref{Eoperatordef}),
define the cone of positive
harmonic functions with respect to $L+\lambda$ as
%
%
\begin{equation}\label{ECLdef}
H_{\lambda} :=\{\eta\in C^2(E) \such L\eta= -\lambda\eta
\mbox
{ and } \eta
> 0\}.
\end{equation}
Set
%
%
\begin{equation}\label{Eblambdadef}
\blambda:=\sup\{\lambda\in\reals\such H_\lambda\neq
\varnothing
\}.
\end{equation}
Since $H_0\neq\varnothing$ (take $\eta
\equiv1$), it follows that $\blambda\geq0$. If $H_{\blambda}\neq
\varnothing$, then, by
construction, there is an $\breta\in C^2(E)$ satisfying
%
%
\begin{equation}\label{eqPDE}
L\breta= -\blambda\breta,
\end{equation}
and $\blambda$ is the largest real for which such an $\breta$ exists.
Thus $\blambda$ is a generalized version of the principal eigenvalue
for $L$ on
$E$. The following result, taken from~\cite{MR1326606}, Theorem 4.3.2,
states that, indeed, $H_{\blambda}\neq\varnothing$.
%
%
\begin{prop}\label{PLambdastructure}
Let Assumption~\ref{assbasic} hold. Then $0\leq\blambda< \infty$ and
$H_{\blambda}\neq\varnothing$.
\end{prop}
%
%
\begin{rem}
To connect Proposition~\ref{PLambdastructure} with
\cite{MR1326606}, Theorem 4.3.2, note that $\lambda_c(D)$ therein is equal to
$-\blambda$. Note also that, by its construction, $\Pi= \varnothing$
if there exists a $t >
0$ such that $\qprob[\zeta> t] = 0$. However, by
\cite{MR1326606}, Theorem~4.4.4, it follows that if such a $t>0$ exists, then
$\blambda=
\infty$. Proposition
\ref{PLambdastructure} thus implies that $\qprob[\zeta>
t] > 0$ for all $t>0$. It is also directly shown in the proof of
Theorem~\ref{thmasymptogrowth1}
below that under Assumption~\ref{assbasic}, $\Pi\neq\varnothing$.
\end{rem}
%
%
\begin{rem}
Proposition~\ref{PLambdastructure} makes no claim regarding the
uniqueness of
$\breta$ corresponding to $\blambda$. For example, when $E=(0,\infty
)$ and
$c\equiv1$, it holds that \mbox{$\blambda= 0$}; hence $\breta$ could be
either~$X$
or $1$. For this $E$ and $c$, Example~\ref{Exaqexpllam0} in Section
\ref{Srelarb} shows that even when uniqueness fails,
a particular choice of $\breta$ may be advantageous.
\end{rem}

The following result, taken from~\cite{MR1326606},
Theorems 4.3.3 and 4.3.4,
provides a way of checking if a particular pair $(\eta,\lambda)$ such that
$\eta\in H_{\lambda}$ corresponds to an optimal pair $(\breta
,\blambda
)$ and
if the optimal pair is unique.
%
%
\begin{prop}\label{Pblambdatest}
Let Assumption~\ref{assbasic} hold. Let $(\eta, \lambda)$ be such that
$\eta\in H_{\lambda}$. Then there exists a unique solution
$(\prob^{\eta}_x)_{x\in\hE}$ to the generalized martingale
problem on $\hE$ for the operator
%
%
\begin{equation}\label{Edoobetatransform}
L^{\eta} = L + c\nabla\log\eta\cdot\nabla,\vadjust{\goodbreak}
\end{equation}
and $(\prob^{\eta}_x)_{x\in\hE}$ possesses the strong Markov
property. Furthermore, if the coordinate mapping process~$X$ is
recurrent under
$(\prob^{\eta}_x)_{x\in E}$, then $\eta$ is unique up to
multiplication by a positive constant, $\breta= \eta$ and $\blambda=
\lambda$.
\end{prop}
%
%
\begin{rem}
Proposition~\ref{Pblambdatest} only covers the case where the
coordinate mapping process~$X$ is recurrent under $(\prob^{\eta
}_x)_{x\in E}$. It should be noted, however, that even when the
coordinate mapping process~$X$ under $(\prob^{\eta}_x
)_{x\in
E}$ is transient, $\eta= \breta$ and $\lambda= \blambda$ is still
possible. Indeed, in Example~\ref{Exaqexpllam0} from Section~\ref
{Srelarb}, $\blambda= 0$
even though $\qprob_x[\zeta< \infty] > 0$ for all $x\in E$, and thus
$\breta= 1$ does not yield a recurrent process.
\end{rem}

\section{The min--max result}\label{Sminmaxresult}

\subsection{The result}

For future reference, let $\breta$ be a solution of (\ref{eqPDE})
corresponding to $\blambda$ with $\breta(x_0) = 1$, and define the
function $\bell\dvtx E \mapsto\Real$ via
%
%
\begin{equation}\label{Ebelldef}
\bell(x) = \log\breta(x)\qquad  \mbox{for } x \in E.
\end{equation}

The following result identifies $\blambda$ with the value in (\ref
{Erobustgrowthval}).
%
%
\begin{theorem} \label{thmasymptogrowth1}
Let Assumption~\ref{assbasic} hold. Let $\breta$ be a solution of
(\ref{eqPDE}) corresponding to
$\blambda$ with $\breta(x_0) = 1$, and define $\bV$ via $\bV_t =
e^{\blambda t}   \breta(X_t)$ for all $t \in\Real_+$. Define also
\[
\bPi:=\Bigl\{\prob\in\Pi\bigm|\pliminft(t^{-1}\log
\breta(X_t)) \geq0 ,\prob\mbox{-a.s.}\Bigr\}.
\]
Then $\bV\in\V$ and $g(\bV; \prob) \geq\blambda$ for all $\prob
\in
\bPi$. Furthermore,
%
%
\begin{equation} \label{eqminmaxforasymptogrowth}
\blambda= \sup_{V \in\V} \inf_{\prob\in\bPi} g(V; \prob) =
\inf_{\prob\in\bPi} \sup_{V \in\V} g(V; \prob).
\end{equation}
\end{theorem}
%
%
\begin{rem}\label{Rtightfamilyinclusion}
The normalized eigenfunction $\breta$ in the statement of Theorem~\ref
{thmasymptogrowth1} may not be unique. Since the class of measures
$\bPi $ depends upon~$\breta$, the variational problems in (\ref
{eqminmaxforasymptogrowth}) also change with~$\breta$. However, the
value~$\blambda$ is the same no matter which~$\breta$ is chosen.

For a given $\breta$, it may seem artificial to restrict attention to
$\bPi$. However, no matter which
$\breta\in H_{\blambda}$ is chosen, $\bPi$ contains all the
probabilities $\prob$ such
that~$X$ is tight in $E$, and hence naturally corresponds
to those $\prob$ for which~$X$ is stable. To see this, assume that~$X$
is tight, and let
$\epsilon> 0$ and $K^{\epsilon}\subseteq E$ be compact such that
$\sup
_{t\geq
0}\prob[X_t\notin K^{\epsilon}]\leq\epsilon$. Set
$\beta
^{\epsilon} =
{\max_{x\in K^{\epsilon}}} |{\log\breta}(x)|$, and note that for any
$\delta> 0$
and $t > \beta^{\epsilon} / \delta$,
\[
\prob[t^{-1}\log\breta(X_t)< -\delta] \leq
\prob[|t^{-1}\log\breta(X_t)| > \delta;X_t\notin
K^{\epsilon} ]\leq\epsilon.
\]
Thus, $\limt\prob[t^{-1}\log\breta(X_t)\geq-\delta]
= 1$ for all
$\delta
> 0$; hence, $\prob\in\Pi^{*}$.
\end{rem}
\begin{pf*}{Proof of Theorem~\ref{thmasymptogrowth1}}
To see why $\bV\in\V$, note that It\^o's formula gives, for each
$\nin
$, each $t \in\Real_+$ and each $\prob\in\Pi$,
%
%
\begin{eqnarray}\label{EVrep}
\bV_{t\wedge\zeta_n} &=& 1 + \int_0^{t\wedge\zeta_n} e^{\blambda s}
\nabla\breta(X_s)'\,\ud X_s\nonumber\\[-8pt]\\[-8pt]
&=& 1 + \int_0^{t\wedge\zeta_n}\bV_s
\nabla\bell(X_s)'\,\ud X_s.\nonumber
\end{eqnarray}
Since $
\prob[\zeta< \infty]=0$ for all $\prob\in\Pi$, it follows that the
equalities in (\ref{EVrep}) hold under $\prob$ when we replace $t
\wedge\zeta_n$ with $t$ for all $t \in\Real_+$. By the
construction of $\bPi$,
$\prob[\limt t^{-1}\log(\bV_t)\geq\gamma] =1$ holds
for all
$\gamma< \lambda^*$ and all
$\prob\in\bPi$. Therefore, Lemma~\ref{Ltwogrowthsareone} implies
$g(\bV; \prob) \geq\blambda$ for all
$\prob\in\bPi$. In particular, $\blambda\leq\sup_{V \in\V}
\inf_{\prob\in\bPi} g(V; \prob)$.

Now, let $\blambda_n$, $\breta_n$ and $\bell_n$ be the equivalents of
$\blambda$, $\breta$ and $\bell$ when $E$ is replaced by $E_n$ in
(\ref{ECLdef}), (\ref{Eblambdadef}), (\ref{eqPDE}) and (\ref
{Ebelldef}). Assumption
\ref{assbasic} gives that $c$ is uniformly elliptic on $E_n$ and hence
$\breta_n \in C^{2,\alpha}(\bar{E}_n)$ and vanishes on
$\partial E_n$~\cite{MR1326606}, Theorem 3.5.5. Furthermore, there exists a solution
to the
generalized martingale problem $(\prob^{*}_{x,n})_{x\in
E_n}$ for
the operator $L^{\breta_n}$ in (\ref{Edoobetatransform}) and the
coordinate process~$X$ under
$(\prob^{*}_{x,n})_{x\in E_n}$ is recurrent in $E_n$
(\cite{MR1326606}, proof of Theorem~4.2.4). This latter fact gives the
uniqueness (up to multiplication by a positive constant) of $\breta_n$.

Set $\bprob_n\!=\!\bprob_{x_0,n}$. It follows that $\bprob_n
[\zeta\!<\!\infty]\!=\!0$ and $\limt\bprob_n [t^{-1} \log\breta(X_t)\!=\break 0 ]\!=\!1$ since
there exists a $K_n>0$ such that $1/K_n < \breta< K_n$ on $E_n$. Thus,
$\bprob_n \in\bPi$ if $\bprob_n \llloc\qprob$. To\vspace*{-1pt} show
the latter, let $(\qprob_{x,n})_{x\in \hE_n}$ be the solution to the
generalized martingale problem for $L$ on $\hE_n$. Let $\qprob_n =
\qprob_{x_0,n}$. It follows from~\cite{MR1326606}, Corollary 4.1.2, and
the recurrence of~$X$ under $\prob^{*}_n$ that for $t>0$,
%
%
\begin{equation}\label{Epnqnrnderiv}
\frac{d\prob^{*}_n}{d\qprob_n}\bigg|_{\mathcal{B}_t} =
e^{\blambda_nt}\frac{\breta_n(X_t)}{\breta_n(x_0)} \indic_{\{
\zeta_n > t\}},
\end{equation}
and thus $\bprob_n |_{\B_t} \ll\qprob_n |_{\B_t}$. This
immediately gives
$\bprob_n |_{\B_{t\wedge\zeta_n}} \ll\qprob_n |_{\B_{t\wedge
\zeta_n}}$ for
each $n$. But, $\qprob_n |_{\B_{t\wedge\zeta_n}} = \qprob
|_{\B_{t\wedge\zeta_n}}$. If $B\in\mathcal{B}_t$ is such that
$\qprob
[B] =
0$, then $\qprob[B\cap\{\zeta_n > t\}] = 0$. Since
$B\cap\{\zeta_n > t\}\in
\mathcal{B}_{t\wedge\zeta_n}$, it follows that $\bprob_n [B\cap
\{\zeta_n > t\}] =
0$. But, $\bprob_n [\zeta_n > t ] = 1$ for each $t$ so $\bprob_n
[B\cap
\{\zeta_n > t\}] =
0$ implies $\bprob_n [B] =0$. Therefore, $\bprob_n|_{\mathcal{B}_t}
\ll
\qprob|_{\mathcal{B}_t}$ and hence $\bprob_n|_{\mathcal{F}_t} \ll
\qprob|_{\mathcal{F}_t}$ as well, proving
$\bprob_n\in\bPi$.

Let $V_n^*$ be defined via $V_n^*(t) = e^{\blambda_n(t\wedge\zeta_n)}
\eta^*_n(X_{t\wedge\zeta_n})$ for $t \in\Real_+$ [in order to avoid the
cumbersome notation $V_{n, t}^*$ for $t \in\Real_+$, we simply use
$V_{n}^* (t)$ here]. The same computations as in (\ref{EVrep})
show that, for all $\prob\in\Pi$,
\[
V_n^* = 1+ \int_0^\cdot\indic_{\{t \leq\zeta_n\}
}e^{\lambda_n
t}\nabla\breta_n(X_t)'\,\ud X_t
\]
and hence $V_n^* \in\V$. Note that $V_n^*$ stays strictly positive under
$\prob_n^*$ since $\bprob_n[\zeta_n < \infty] = 0$.
Now, $g(V_n^*;
\bprob_n) \leq\blambda_n$ is
immediate since $E_n$ is bounded, and hence $\eta^*_n$ is bounded above
on $E_n$. Furthermore,
$V_n^*$ is the \textit{num\'eraire portfolio in $\V$ under $\bprob_n$}, which
means that $V/ V_n^*$ is a (nonnegative) $\prob_n^*$-supermartingale
for all
$V \in\V$. To wit, consider any other $V \in\V$, and write $V = 1 +
\int_0^\cdot\vartheta'_t \,\ud X_t$. A straightforward use of It\^o's formula
using the fact that $L \breta_n(x) = -\blambda_n(x) \breta_n(x)$ holds
for all
$x \in E_n$ gives that, under $\prob_n^*$,
\[
\frac{V }{V_n^* } = \int_0^\cdot\biggl(\frac{\vartheta_t - V_t
\nabla\bell_n(X_t)}{V_n^*(t)}\biggr)'\,d \bigl(X_t -
c(X_t)\nabla\bell
_n(X_t)\,\ud t\bigr);
\]
since the process $X - \int_0^\cdot c(X_t)\nabla\bell_n(X_t)\,\ud t$ is a
local $\bprob_n$-martingale, the num\'e\-raire property of $V_n^*$ in $\V$
under $\bprob_n$ follows. In view of the nonnegative supermartingale
convergence theorem, the nonnegative supermartingale property of $V/
V^*_n$ under $\bprob_n$ gives that $\limsup_{t \to\infty} \log
(V_t / V^*_n(t)) \leq0$ in the $\bprob_n$-a.s. sense.
Therefore, $g(V;
\bprob_n) \leq g(V_n^*; \bprob_n)$ holds for all $V \in\V$.
Since $g(V_n^*; \bprob_n) \leq\blambda_n$, $\sup_{V \in\V} g(V;
\bprob
_n) \leq\blambda_n$ holds, and
$\inf_{\prob\in\Pi^*} \sup_{V \in\V} g(V; \prob) \leq\inf
_{n\in
\Natural}
\blambda_n$.\break However, $\mbox{$\downarrow$}\limn\blambda_n = \blambda$ holds
in view of Assumption~\ref{assbasic} (\cite{MR1326606}, Theorem~4.4.1).
This gives $\inf_{\prob\in
\Pi^*} \sup_{V \in\V} g(V; \prob) \leq\blambda$ and completes the
argument.
\end{pf*}

\subsection{An ``almost sure'' class of measures}

For a fixed $\breta\in H_{\blambda}$, define the following class of
probability measures:
\[
\bPi_{\mathrm{a.s.}} :=\Bigl\{\prob\in\Pi\bigm|\liminft
(t^{-1}\log\breta(X_t)) \geq0, \prob\mbox{-a.s.}
\Bigr\}.
\]
It is straightforward to check that $\bPi_{\mathrm{a.s.}}\subseteq\bPi$. Furthermore,
as will be seen in Section~\ref{Sbprob}, it can be easier to verify
inclusion in $\bPi_{\mathrm{a.s.}}$ than
$\bPi$. For $\prob\in\Pi$ and $V\in\V$ define
\[
g_{\mathrm{a.s.}} (V; \prob) :=\sup\Bigl\{\gamma\in\Real\bigm|\liminft
(t^{-1}\log V_t)\geq\gamma, \prob
\mbox{-a.s.}\Bigr\}
\]
as the ``almost sure'' growth of the wealth $V$. The following result
is the analog
of Theorem~\ref{thmasymptogrowth1} for the class of measures $\bPi
_{\mathrm{a.s.}}$ and for the growth rate
$g_{\mathrm{a.s.}} (V;\prob)$.
%
%
\begin{prop} \label{Palmostsureresult}
Let Assumption~\ref{assbasic} hold. Let $\breta\in H_{\blambda}$ be such
that $\breta(x_0) = 1$, and define $\bPi_{\mathrm{a.s.}}$ as above. Define
$\bV\in\V$ by $\bV_t = e^{\blambda t}\breta(X_t)$,\break $t\geq0$, as in Theorem
\ref{thmasymptogrowth1}. Then
$g_{\mathrm{a.s.}} (\bV;\prob)\geq\blambda$ for all $\prob\in\bPi_{\mathrm{a.s.}}$ and
\[
\blambda= \sup_{V\in\V}\inf_{\prob\in\bPi_{\mathrm{a.s.}}} g_{\mathrm{a.s.}} (V;
\prob)
= \inf_{\prob\in\bPi_{\mathrm{a.s.}}}\sup_{V \in\V} g_{\mathrm{a.s.}} (V; \prob).
\]
\end{prop}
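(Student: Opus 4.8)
The plan is to follow the proof of Theorem \ref{thm: asympto growth 1} almost verbatim, the point being that every upper estimate obtained there was in fact an almost-sure statement and that the approximating measures constructed there already lie in the smaller class $\bPi_{a.s.}$.

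\emph{Lower bound.} That $\bV \in \V$ is exactly as in the proof of Theorem \ref{thm: asympto growth 1}. For $\prob \in \bPi_{a.s.}$, writing $t^{-1}\log \bV_t = \blambda + t^{-1}\log \breta(X_t)$ and using $\liminft \pare{t^{-1}\log\breta(X_t)} \geq 0$ $\prob$-a.s., one gets $\liminft \pare{t^{-1}\log \bV_t} \geq \blambda$ $\prob$-a.s., hence $g_{a.s.}(\bV;\prob) \geq \blambda$. Consequently $\blambda \leq \sup_{V\in\V}\inf_{\prob\in\bPi_{a.s.}} g_{a.s.}(V;\prob)$.

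\emph{Upper bound.} Recall from the proof of Theorem \ref{thm: asympto growth 1} the bounded subdomains $E_n$, the eigenpairs $(\blambda_n, \breta_n)$, the measures $\bprob_n \in \bPi$ with $\bprob_n[\zeta_n < \infty] = 0$, and the wealth processes $V_n^*$, together with $\blambda_n \downarrow \blambda$. First I would check $\bprob_n \in \bPi_{a.s.}$: since $1/K_n < \breta < K_n$ on $E_n$ for some $K_n > 1$ and $X$ stays in $E_n$ $\bprob_n$-a.s., one has $|t^{-1}\log\breta(X_t)| \leq t^{-1}\log K_n \to 0$, so $\lim_{t\to\infty} t^{-1}\log\breta(X_t) = 0$ $\bprob_n$-a.s., which is more than enough. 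Next, for any $V \in \V$ the nonnegative supermartingale property of $V/V_n^*$ under $\bprob_n$ (established in the proof of Theorem \ref{thm: asympto growth 1}) together with the supermartingale convergence theorem yields $\bprob_n$-a.s.\ convergence of $V_t/V_n^*(t)$ to a finite limit, whence $\limsup_{t\to\infty} \pare{t^{-1}\log(V_t/V_n^*(t))} \leq 0$ $\bprob_n$-a.s.; and since $\breta_n$ is bounded above on $E_n$ and $\zeta_n = \infty$ $\bprob_n$-a.s., $\limsup_{t\to\infty}\pare{t^{-1}\log V_n^*(t)} \leq \blambda_n$ $\bprob_n$-a.s. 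Adding, $\limsup_{t\to\infty}\pare{t^{-1}\log V_t} \leq \blambda_n$, hence a fortiori $\liminft\pare{t^{-1}\log V_t} \leq \blambda_n$ $\bprob_n$-a.s., so $g_{a.s.}(V;\bprob_n) \leq \blambda_n$. Therefore $\inf_{\prob\in\bPi_{a.s.}}\sup_{V\in\V} g_{a.s.}(V;\prob) \leq \sup_{V\in\V} g_{a.s.}(V;\bprob_n) \leq \blambda_n$ for every $n$, and letting $n \to \infty$ gives $\inf_{\prob\in\bPi_{a.s.}}\sup_{V\in\V} g_{a.s.}(V;\prob) \leq \blambda$.

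Combining the two bounds with the elementary inequality $\sup_{V}\inf_{\prob} g_{a.s.} \leq \inf_{\prob}\sup_{V} g_{a.s.}$ gives $\blambda \leq \sup_{V}\inf_{\prob} g_{a.s.}(V;\prob) \leq \inf_{\prob}\sup_{V} g_{a.s.}(V;\prob) \leq \blambda$, so all three equal $\blambda$. I do not expect a genuine obstacle: the proof of Theorem \ref{thm: asympto growth 1} already delivers, $\bprob_n$-almost surely, both the supermartingale bound and the control of $V_n^*$, and the only new observation is that the two-sided bound on $\breta$ over $E_n$ places $\bprob_n$ in $\bPi_{a.s.}$, not merely in $\bPi$. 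The one point worth stating carefully is that the pathwise control of $\limsup_{t\to\infty}(t^{-1}\log(V_t/V_n^*(t)))$ comes from a.s.\ convergence to a \emph{finite} limit (which may be $0$), so the final estimate should be phrased through $\limsup$ rather than attempting to treat $\liminf$ directly.
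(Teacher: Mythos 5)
Your proposal is correct and follows essentially the same route as the paper's proof: the lower bound is immediate from the definition of $\bPi_{a.s.}$, and the upper bound reuses the measures $\bprob_n$ from the proof of Theorem \ref{thm: asympto growth 1}, noting they lie in $\bPi_{a.s.}$ via the two-sided bound on $\breta$ over $E_n$, together with the \num\ property of $V_n^*$, the supermartingale convergence theorem, and $\blambda_n \downarrow \blambda$. Your closing remark about phrasing the pathwise bound through $\limsup$ (a.s.\ convergence of $V/V_n^*$ to a finite limit) is exactly the right way to justify $g_{a.s.}(V;\bprob_n) \leq \blambda_n$, matching the paper's argument.
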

%
%
\begin{rem}
Concerning the class $\bPi$, in Remark~\ref{Rtightfamilyinclusion} it was
discussed that when the coordinate process~$X$ is $\prob$-tight, then
$\prob\in\bPi$. In contrast, a~useful characterization of even a
subset of
$\bPi_{\mathrm{a.s.}}$ independent of $\breta$ is difficult. On the positive
side, if
$\prob$ is such that~$X$ never exits $E_n$ for some $n$, then $\prob
\in
\bPi_{\mathrm{a.s.}}$.
However, even if~$X$ is positive recurrent under $\prob$, it cannot
immediately be said that $\prob\in\bPi_{\mathrm{a.s.}}$
\end{rem}
\begin{pf*}{Proof of Proposition~\ref{Palmostsureresult}}
$\!\!\!$By construction of the class $\bPi_{\mathrm{a.s.}}$ it follows that $g_{\mathrm{a.s.}}
(\bV;
\prob) \geq\blambda$ for all $\prob\in\bPi_{\mathrm{a.s.}}$. Thus
$\blambda\leq
\sup_{V \in\V} \inf_{\prob\in\bPi_{\mathrm{a.s.}}} g_{\mathrm{a.s.}} (V; \prob)$.
The inequality
$\blambda\geq\inf_{\prob\in\bPi_{\mathrm{a.s.}}} \sup_{V \in\V}g_{\mathrm{a.s.}} (V;
\prob)$
follows by essentially the same argument as in Theorem
\ref{thmasymptogrowth1}. Specifically, let $\blambda_n, \breta_n,
\bell
_n, \bV_n$
and $\bprob_n$ be as in the proof of Theorem~\ref{thmasymptogrowth1}. It
was shown therein that $\bprob_n\in\Pi$ for each $n$ and that the coordinate
process is recurrent in $E_n$ under $(\bprob_{x,n})_{x\in E_n}$. In fact,
$\bprob_n\in\bPi_{\mathrm{a.s.}}$ because there is a $K_n > 0$ such that
$1/K_n <
\breta
< K_n$ on $E_n$ and hence, $\bprob_n$-a.s., $\lim_{t\rightarrow
\infty}
t^{-1}\log\breta(X_t) = 0$. Furthermore, since $\breta_n$ is bounded
from above on
$E_n$ it holds that $g_{\mathrm{a.s.}}(\bV_n;\bprob_n)\leq\blambda_n$. Using the
num\'eraire property of $\bV_n$ under $\bprob_n$ and the supermartingale
convergence
theorem, it follows that $g_{\mathrm{a.s.}}(V;\bprob_n)\leq
g_{\mathrm{a.s.}}(\bV;\bprob_n)$ holds for all $V\in\V$. Therefore,
$\sup_{V\in\V}g_{\mathrm{a.s.}}(V;\bprob_n)\leq\blambda_n$ and
\[
\inf_{\prob\in\bPi_{\mathrm{a.s.}}}\sup_{V\in\V}g_{\mathrm{a.s.}}(V;\bprob_n)\leq
\inf_{n\in\Natural}\blambda_n = \blambda
\]
since $\mbox{$\downarrow$}\lim_{n\rightarrow\infty}\blambda_n = \blambda$ as
seen in
the proof of Theorem~\ref{thmasymptogrowth1}. This completes the argument.
\end{pf*}

\section{An interesting probability measure}\label{Sbprob}

Let $\breta\in H_{\blambda}$, and let $(\bprob_x)_{x\in
\hE
}$ be the solution to the
generalized martingale problem on $\hE$ for the operator~$L^{\breta}$
given in~(\ref{Edoobetatransform}). Set $\bprob\equiv\bprob_{x_0}$.

It is of great interest to know whether $\prob^* \in\Pi^*$. To
begin with, if this is indeed true and $g(V^{*},\bprob)=\blambda$,
the pair $(V^*, \prob^*)$ constitutes a saddle point for the minimax
problem described in (\ref{eqminmaxforasymptogrowth}). Indeed, using
the num\'eraire property of $\bV$ under
$\bprob$ and the definition of $\bPi$, it follows that, in this case
(see the
proof of Theorem~\ref{thmasymptogrowth1})
\[
g (V; \prob^*) \leq g (V^*; \prob^*) \leq g (V^*; \prob)\qquad \mbox{for
all } V \in\V\mbox{ and } \prob\in\Pi^*.
\]
Furthermore, in Section~\ref{Srelarb} where connections between robust
growth-optimal portfolios and optimal arbitrages are studied, the
behavior of the coordinate process~$X$ under $\bprob$ becomes
important. To this
end, presented in the sequel are some results that explore the behavior
of~$X$
under $\bprob$. In particular, Propositions~\ref{Phopfprop} and \ref
{PQtailsprop1} give sufficient conditions to ensure that $\prob^* \in
\Pi^*$.
%
%
\begin{rem}
By construction, if $\bprob[\zeta< \infty] > 0$ then
$\bprob
\notin\bPi$. Example~\ref{Exaqexpllam0} provides a case when explosion of
$X$ under $\bprob$ occurs \textit{for some} $\breta\in H_{\blambda}$.
Furthermore,~\cite{Pinchover95} contains an example showing that for
\textit{for all} $\breta\in H_{\blambda}$, the probability $\bprob$ that
is constructed from $\breta$ leads to explosive behavior of~$X$ under
$\bprob$; hence, none of the candidate $\bprob$ is in $\bPi$. Now,
consider the case when $\breta\in H_{\blambda}$ is such that~$X$ is
nonexplosive under $\bprob$. In this instance, Corollary \ref
{CQtailsprop1} shows that if $\blambda= 0$, then $\bprob\in\bPi$. As
for when $\blambda> 0$, although only \textit{sufficient} conditions
ensuring that $\prob^* \in\Pi^*$ are presented in this section,
examples where $\prob^* \notin\Pi^*$ have not been found. It is an
open question whether, under Assumption~\ref{assbasic}, $\bprob\in
\bPi$
holds whenever $\blambda> 0$ and $\breta\in H_{\blambda}$ is such that
$X$ is nonexplosive under $\bprob$. See Example~\ref{Exaimpossible} in
Section~\ref{Sexamples} for a potential counterexample.
\end{rem}

The first result gives conditions under which $\bprob\in\Pi$ and
relates the tail probabilities of $\zeta$ under $\qprob$ and robust
growth-optimal strategies.
%
%
\begin{prop}\label{Prelarbrobgrowthprop}
Let Assumption~\ref{assbasic} hold, and let $\breta\in H_{\blambda}$ be
such that $\bprob_x[\zeta< \infty] = 0$ holds for all $x\in E$. Then
$\bprob\in\Pi$ and
%
%
\begin{equation}\label{Eqprobexitrep}\qquad
\qprob_{x} [\zeta> T] =
\breta(x)\mathbb{E}^{\bprob}_{x}\biggl[\frac{1}{V^*_T}\biggr]
\qquad\mbox{holds for all }
T \in\Real_+ \mbox{ and } x \in E.
\end{equation}
\end{prop}
\begin{pf}
In a similar manner to (\ref{Epnqnrnderiv}), if $\bprob[\zeta <\infty]
=0 $, then it follows from~\cite{MR1326606}, Corollary 4.1.2, that
\[
\frac{d\bprob}{d\qprob}\bigg|_{\mathcal{B}_t} = e^{\blambda
t}\frac{\breta(X_t)}{\breta(x_0)}\indic_{\{\zeta> t\}}
\]
from which it immediately holds that $\bprob\llloc\qprob$, and hence
$\bprob\in\Pi$.
Given $V^*_T = \exp(\lambda^* T) \eta^* (X_T)$, the equality in
(\ref{Eqprobexitrep}) follows immediately from
\cite{MR1326606}, Theorem~4.1.1.
\end{pf}

Recall from Remark~\ref{Rtightfamilyinclusion} that $\prob^*$-tightness
of $(X_t)_{t\in\Real_+}$ implies that \mbox{$\prob^* \in
\Pi^*$}. The following result is useful because it shows that, under
Assumption~\ref{assbasic}, positive recurrence and tightness of
$(X_t)_{t \in\Real_+}$ under $\prob^*$ are equivalent notions. Note
that, in general,
even in the one-dimensional bounded case, the behavior of $(X_t)_{t \in
\Real_+}$ under $\prob^*$ can vary from positive recurrence to
transience as
is shown in the examples in Section~\ref{SSonedimex}.
%
%
\begin{prop}\label{Ptighteqposrec}
Let Assumption~\ref{assbasic} hold. Then the following are
equivalent:
\begin{longlist}[(2)]
\item[(1)] The coordinate mapping process~$X$ is positive recurrent
under $(\bprob_x)_{x\in
E}$.
\item[(2)] For some $x\in E$ the family of random variables
$(X_t)_{t\geq0}$ is $\bprob_x$-tight in~$E$.
\end{longlist}
\end{prop}
\begin{pf}
Under Assumption~\ref{assbasic},~$X$ is recurrent under $(\bprob
_x)_{x\in E}$ if for any $x, y \in E$ and $\eps> 0$, if
$\tau_{B(y,\eps)}$ is the first time the coordinate process enters into
the closed ball of radius $\eps$ around $y$, then
$\bprob_x[\tau_{B(y,\eps)} < \infty] = 1$. Note that if~$X$ is
recurrent under $(\bprob_x)_{x\in E}$, then for all $x\in E$,
$\bprob_x[\zeta< \infty] = 0$ (\cite{MR1326606}, Theorem~2.8.1).
Furthermore, given that~$X$ is recurrent under $ (\bprob_x)_{x\in E}$,
then~$X$ is further positive recurrent under $(\bprob_x)_{x\in E}$ if
there exists a function $\tilde{\eta}^{*} > 0$ such that $\tilde
{L}^{*}\tilde {\eta}^{*} = 0$ and $\tilde{\eta}^{*}\in\Lb^1(E,
\mathrm{Leb})$ where $\tilde {L}^{*}$ is the formal adjoint to $L^{*}$
(\cite{MR1326606}, Section 4.9). Under Assumption~\ref{assbasic}, and
recalling the definition of $\bell$ from (\ref{Ebelldef}),
$\tilde{L}^{*}$ is the differential operator acting on $f\in C^2(E)$ by
\[
\tilde{L}^{*} f (x) =
\frac{1}{2}\sum_{i,j=1}^d\frac{\partial^2}{\partial
x_i\,\partial x_j}(c_{ij}(x)f(x)) -
\sum_{i=1}^d\frac{\partial}{\partial
x_i}((c(x)\nabla\bell(x))_{i}f(x)).
\]

Assume that~$X$ is positive recurrent under
$(\bprob_x)_{x\in E}$ and normalize $\tilde{\eta}^{*}$ so
that $\int_{E}\tilde{\eta}^{*}(y)\,dy = 1$. By the ergodic theorem
(\cite{MR1326606}, Theorem 4.9.9) it follows that for any bounded measurable function
$f\dvtx E\mapsto\reals$,
%
%
\begin{equation}\label{Eposrecergodicthm}
\lim_{t\uparrow\infty}\mathbb{E}^{\bprob}_{x}[f(X_t)
] =
\int_{E}f(y)\tilde{\eta}^{*}(y)\,dy.
\end{equation}
Since $\tilde{\eta}^{*}$ is a probability density, for any $\eps>0$
there is a compact set
$K_{\eps}\subset E$ such that
\[
\int_{K_{\eps}^c}\tilde{\eta}^{*}(y)\,dy \leq\eps.
\]
Thus, taking $f_{\eps}(x) = \indic_{K_\eps^c}(x)$ in
(\ref{Eposrecergodicthm}), the continuity of~$X$ and $\bprob
[\zeta< \infty]=0$ imply that $(X_t)_{t\geq0}$
is $\bprob_x$-tight
for any $x\in E$.

As for the reverse implication, assume for some $x\in E$ that
$(X_t)_{t\geq0}$ is $\bprob_x$-tight in $E$,
and for each $\eps$ let $K_{\eps}\subset E$ be a compact set such that
%
%
\begin{equation}\label{Ebprobxtightcompact}
\inf_{t\geq0}\bprob_x [X_t\in K_\eps] \geq
1-\eps.
\end{equation}

Under Assumption~\ref{assbasic} there are only three possibilities
for the coordinate process~$X$ under $(\bprob_x)_{x\in
\hat{E}}$ (\cite{MR1326606}, Section 2.2.8):
\begin{longlist}[(3)]
\item[(1)]~$X$ is transient: for all $x\in E$ and $n\in\Natural$,
$\bprob_x[X \mbox{ is eventually in } E_n^c] = 1$;
\item[(2)]~$X$ is null recurrent:~$X$ is recurrent and for any $\phi
\in C^2(E),
\phi> 0$ such that
$\tilde{L}^{*}\phi= 0$, $\int_{E}\phi(y)\,dy = \infty$;
\item[(3)]~$X$ is positive recurrent:~$X$ is recurrent but not null recurrent.
\end{longlist}
Clearly, if $(X_t)_{t\geq t_0}$ is $\bprob_x$-tight in $E$
for some $x\in E$, then~$X$ cannot be transient.
Furthermore, if~$X$ were null recurrent, then for each $x\in E$ and
any compact set $K\subset E$ it would follow that
(\cite{MR1326606}, Theorem 4.9.5)
\[
\lim_{t\uparrow\infty}\frac{1}{t}\int_0^t\bprob_x[X_s\in
K]\,ds = 0.
\]
But, by the assumption of tightness, for the compact set $K_{\eps
}\subset E$
appearing in (\ref{Ebprobxtightcompact}),
\[
\liminf_{t\uparrow\infty}\frac{1}{t}\int_0^t\bprob_x
[X_s\in K_\eps]\,ds \geq(1-\eps).
\]
Therefore,~$X$ cannot be null-recurrent. Thus~$X$ is positive
recurrent under $(\bprob_x)_{x\in E}$.\vadjust{\goodbreak}
\end{pf}

The following result is useful when point-wise estimates for
$\breta$ are available.
%
%
\begin{prop}\label{Phopfprop}
Let Assumption~\ref{assbasic} hold, and let $\bell$ be as in (\ref
{Ebelldef}). If \mbox{$\blambda>0$}, $\bprob[\zeta<\infty] =
0$ and
%
%
\begin{equation}\label{Ehopfcond1}
\lim_{n\uparrow\infty}\inf_{x\in E_n^c}\frac{1}{2}\nabla\bell(x)'
c(x)\nabla\bell(x) \geq\blambda,
\end{equation}
then $\bprob\in\bPi$.
\end{prop}
%
%
\begin{rem}
If $c$ is uniformly elliptic on $E$, and $E$ is bounded with a~smooth
boundary, $\blambda$ corresponds to the principal eigenvalue for $L$
acting on
functions $\eta$ which vanish on $\partial E$. Since $
(e^{\blambda
t}\breta(X_t))^{-1}$ is a $\bprob$-supermartin\-gale, it follows
that $\bprob[\zeta< \infty] = 0$. Furthermore, Hopf's
lemma asserts that~$\nabla\breta$ does not vanish on~$\partial E$, so (\ref{Ehopfcond1})
holds as
well; indeed, the quantity on the left-hand side is unbounded from above.
\end{rem}
\begin{pf*}{Proof of Proposition~\ref{Phopfprop}}
That $\bprob\in\Pi$ follows by Proposition
\ref{Prelarbrobgrowthprop}.
Recall that $\breta(x_0) =1$. Now,
%
%
\begin{eqnarray}\label{Ebproblogrep}
\frac{1}{t}\bell(X_t) &=&
\frac{1}{t}\int_0^t\biggl(\frac{1}{2}\nabla\bell(X_s)'c(X_s)\nabla
\bell
(X_s)-\blambda\biggr) \,d s\nonumber\\[-8pt]\\[-8pt]
&&{}+ \frac{1}{t}\int_0^t\nabla\bell(X_s)'\sigma(X_s) \,\ud
W^{\bprob}_s,\nonumber
\end{eqnarray}
where $W^{\bprob}$ is a Brownian motion under $\bprob$. By
(\ref{Ehopfcond1}), there is a $\tilde{\lambda} > 0$ such that, for $n$
large enough,
%
%
\begin{eqnarray}\label{Ehopfgradintbound}\qquad
&&\int_0^t\nabla\bell(X_s)'c(X_s)\nabla\bell(X_s) \,d s \nonumber\\[-8pt]\\[-8pt]
&&\qquad\geq
\tilde{\lambda}\int_0^t\indic_{\{X_s\in E_n^c\}}\,ds +
\int_0^t\nabla\bell(X_s)'c(X_s)\nabla\bell(X_s)\indic_{\{X_s\in
E_n\}}
\,d s.\nonumber
\end{eqnarray}
Under Assumption~\ref{assbasic},~$X$ is either positive recurrent, null
recurrent or transient under $(\bprob_x)_{x\in E}$. If
$X$ is
positive recurrent, then, since $\blambda> 0$ implies that $\breta$
is not
identically constant, it follows that (\cite{MR1326606}, Theorem 4.9.5)
for $n$
large enough
\[
\lim_{t\uparrow\infty}\int_0^t\nabla\bell(X_s)'c(X_s)\nabla\bell
(X_s)\indic_{\{X_s\in
E_n\}} \,d s = \infty,\qquad \bprob\mbox{-a.s.}
\]
Similarly, if~$X$ is either null recurrent or transient it follows that (again,
by~\cite{MR1326606}, Theorem 4.9.5)
\[
\lim_{t\uparrow\infty}\tilde{\lambda}\int_0^t\indic_{\{X_s\in
E_n^c\}
}\,ds = \infty,\qquad \bprob\mbox{-a.s.}
\]
Using (\ref{Ehopfgradintbound}) it thus holds in each case
\[
\lim_{t\uparrow\infty}
\int_0^t\nabla\bell(X_s)'c(X_s)\nabla\bell(X_s) \,d s = \infty
,\qquad
\bprob\mbox{-a.s.}
\]
Let
$M = \int_0^\cdot\nabla\bell(X_s)'\sigma(X_s) \,\ud W^{\bprob}_s$,
so that
\[
[M, M] = \int_0^\cdot\nabla\bell(X_s)'c(X_s)\nabla\bell(X_s)
\,\ud s.
\]
By the Dambins, Dubins and Schwarz theorem
(\cite{MR1121940}, Theorem 3.4.6), there exists a standard Brownian motion (under
$\bprob$) $B$ such that $M = B_{[M, M]_\cdot}$. Therefore, one can
write (\ref{Ebproblogrep}) as
\[
\frac{1}{t}\bell(X_t) = - \lambda^* + \frac{[M, M]_t}{2 t} \biggl(1
+ 2 \frac{B_{[M, M]_t}}{[M, M]_t}\biggr).
\]
By the strong law of large numbers,
\[
\lim_{t\uparrow\infty}\frac{B_{[M, M]_t}}{[M, M]_t} = 0,\qquad \bprob
\mbox{-a.s.},
\]
which means that
%
%
\begin{equation}\label{Eliminfequiv}
\liminf_{t\uparrow\infty}\frac{1}{t}\bell(X_t) \geq
- \lambda^* + \liminf_{t\uparrow\infty} \frac{[M,M]_t}{2 t},\qquad
\bprob\mbox{-a.s.}
\end{equation}

If~$X$ is positive recurrent under $\bprob$, then $\bprob\in\bPi$ as
shown in
Proposition~\ref{Ptighteqposrec} and Remark
\ref{Rtightfamilyinclusion}. Otherwise, note that
because of (\ref{Ehopfcond1}), for any $\delta> 0$ and $\nin$ large
enough,
\begin{eqnarray*}
- \lambda^* + \frac{[M,M]_t}{2 t} &\geq&
- \delta\frac{1}{t}\int_0^t \indic_{\{X_s\in E_n^c\}} \,d s -
\blambda\frac{1}{t}\int_0^t \indic_{\{X_s\in E_n\}} \,d s\\
&\geq& - \delta-\blambda\frac{1}{t}\int_0^t \indic_{\{X_s\in E_n\}}
\,d s.
\end{eqnarray*}
Now, if~$X$ is null-recurrent under $\bprob$, then from
\cite{MR1326606}, Theorem 4.9.5, it follows that
\[
\lim_{t\uparrow\infty}\frac{1}{t}\int_0^t \indic_{\{X_s\in E_n\}}
\,d s
= 0, \qquad \bprob\mbox{-a.s.}
\]
proving, in view of (\ref{Eliminfequiv}), that $\bprob\in\bPi_{\mathrm{a.s.}}$,
and hence $\bprob\in\bPi$. Clearly,
\[
\{X \mbox{ eventually in } E_n^c\} \subseteq
\biggl\{\lim_{t\uparrow\infty}\frac{1}{t}\int_0^t\indic_{\{X_s\in
E_n\}} \,d s=0\biggr\}.
\]
Therefore, if~$X$ is transient it follows that $\bprob\in\bPi$.
\end{pf*}

Another result giving a condition on whether $\prob^* \in\Pi^*$ based
on the
tail-decay of the distribution of $\zeta$ under $\qprob$ will be established.
%
%
\begin{prop}\label{PQtailsprop1}
Let Assumption~\ref{assbasic} hold. If $\bprob[\zeta<\infty
] = 0$ and
%
%
\begin{equation}\label{EQtails1}
\liminf_{t \uparrow\infty} \biggl(- \frac{1}{t} \log\qprob
[\zeta> t ]\biggr) \geq\blambda,\vspace*{-1pt}
\end{equation}
then $\bprob\in\bPi$.\vspace*{-2pt}
\end{prop}

When $\blambda= 0$ the fact that $\qprob[\zeta> t]\leq1$
immediately yields that $\bprob\in\bPi$.\vspace*{-2pt}
%
%
\begin{cor}\label{CQtailsprop1}
Let Assumption~\ref{assbasic} hold. If $\blambda= 0$ and $\bprob
[\zeta< \infty] = 0$ then $\bprob\in\bPi$.
\end{cor}
\begin{pf*}{Proof of Proposition~\ref{PQtailsprop1}}
That $\bprob\in\Pi$ follows by Proposition
\ref{Prelarbrobgrowthprop}. Also, by Proposition \ref
{Prelarbrobgrowthprop}, using the fact that $V^*_t = \exp(\lambda^* t)
\eta^* (X_t)$ for $t \in\Real_+$,
\[
\log\biggl(\mathbb{E}^{\bprob}\biggl[\frac{1}{\eta
^*(X_t)}\biggr]\biggr) = \lambda^*
t +
\log(\qprob[\zeta> t]) - \log\eta^* (x_0).\vspace*{-1pt}
\]
Thus, (\ref{EQtails1}) implies
%
%
\begin{equation}\label{Elogexpecval}
\limsup_{t \uparrow\infty} \biggl(\frac{1}{t} \log\biggl(\mathbb
{E}^{\prob^*}\biggl[\frac{1}{\eta^*(X_t)}\biggr]\biggr)
\biggr)\leq0.\vspace*{-1pt}
\end{equation}
Now, by Chebyshev's inequality, for each $\epsilon> 0$,
\begin{eqnarray*}
\frac{1}{t} \log\biggl(\prob^* \biggl[\frac{1}{t} \log\breta(X_t)
\leq- \epsilon\biggr]\biggr) &=& \frac{1}{t} \log\biggl(\prob^*
\biggl[\frac{1}{\breta(X_t)} \geq\exp(\epsilon t
)\biggr]\biggr)
\\[-2pt]
&\leq&\frac{1}{t} \log\biggl(\exp(-\epsilon t)\mathbb
{E}^{\bprob
}\biggl[\frac{1}{\breta(X_t)}\biggr]\biggr) \\[-2pt]
&=& - \epsilon+ \frac{1}{t} \log\biggl(\mathbb{E}^{\bprob}
\biggl[\frac{1}{\breta(X_t)}\biggr] \biggr).\vspace*{-1pt}
\end{eqnarray*}
In conjunction with (\ref{Elogexpecval}), this gives
\[
\limsup_{t \uparrow\infty} \biggl(\frac{1}{t} \log\biggl(\bprob
\biggl[\frac{1}{t} \log\eta^* (X_t) \leq- \epsilon\biggr]\biggr)
\biggr) \leq
- \epsilon,\vspace*{-1pt}
\]
which implies, in particular, that
\[
\lim_{t \uparrow\infty} \bprob\biggl[\frac{1}{t} \log\breta(X_t)
\leq- \epsilon\biggr] = 0.\vspace*{-1pt}
\]
Since this is true for all $\epsilon> 0$, it follows that $\prob^*
\in
\Pi^*$.\vspace*{-2pt}
\end{pf*}
%
%
\begin{rem}
From~\cite{MR1326606}, Theorem 4.4.4 (note that there, $\lambda_c$ is
used in place of $-\blambda$),
\[
-\blambda=
\lim_{n\uparrow\infty}\lim_{t\uparrow\infty}\frac{1}{t}\log
\qprob[\zeta_n > t].\vspace*{-1pt}
\]
Since $\qprob[\zeta_n>t]\leq\qprob[\zeta> t]$ it holds that
\[
\blambda+ \liminf_{t\uparrow\infty}\frac{1}{t}\log\qprob[\zeta> t]
\geq0.\vspace*{-1pt}\vadjust{\goodbreak}
\]
In particular, (\ref{EQtails1}) is really equivalent to
\[
\lim_{t \uparrow\infty} \biggl(\frac{1}{t} \log\qprob
[\zeta> t ]\biggr) = - \blambda.
\]
\end{rem}

\section{Connections with optimal arbitrages}\label{Srelarb}

In~\cite{KarFern10a}, and quite close to the setting considered here, the
authors treat the problem of \textit{optimal arbitrage} on a given
finite time
horizon. We briefly mention the main points below, sending the interested
reader to~\cite{KarFern10a} for a more in-depth treatment.

Consider a class of probabilities $(\prob_x)_{x \in E}$ on $(\Omega,
\F_\infty)$ under which the coordinate process~$X$ has Markovian
structure, and with the property that $\prob_x \llloc\qprob_x$ holds
for all $x \in E$. Define a function $U\dvtx\Real_+ \times E \mapsto[0,1]$
via the following recipe:
for $(T,x) \in\Real_+ \times E$, set
\[
1 / U(T,x) = \sup\{v \in\Real_+ \such\exists V \in\V\mbox{
such that } \prob_x[V_T \geq v] = 1\}.
\]
In words, $1 / U(T,x)$ is the maximal capital that one can realize at
time $T$
starting from unit initial capital when
the market configuration at the initial time is $x \in E$. Equivalently,
$U(T,x)$ is the minimal capital required in order to ensure at least
one unit
of wealth at time $T$ 
when the market
configuration at the initial time is $x \in E$. Arbitrage on the finite time
interval $[0, T]$ exists if and only if $U(T, x) < 1$. Using the
notation of
the present paper and recalling that for $x_0 \in E$ the subscripts in the
probability measures are dropped, it is shown in~\cite{KarFern10a} that
arbitrage over a time horizon $[0,T]$ exists if and only if $\qprob
[\zeta>
T] < 1$. Furthermore, it is established that $U(T, x) = \qprob_x
[\zeta
> T]$
for all $(T, x) \in\Real_+ \times E$, and that the optimal arbitrage exists
and is given by $V^T = (V^T_t)_{t \in[0,T]}$, where
%
%
\begin{equation} \label{eqoptarb}
V^{T}_t = \frac{\qprob[\zeta> T \such\F_t]}{\qprob[\zeta> T]} =
\frac
{U(T - t, X_t)}{U(T, x_0)}\qquad \mbox{for } t \in[0, T].
\end{equation}
Observe that the optimal arbitrage $V^T$ in (\ref{eqoptarb}) is
normalized so that $V^T_0 = 1$. In~\cite{KarFern10a}, the normalization
is such that the \textit{terminal} value of the optimal relative
arbitrage is unit; as already mentioned, in that case $U(T, x_0)$ is
the minimal capital required at time zero to ensure a unit of capital
at time $T$.
%
%
\begin{rem}
In~\cite{KarFern10a}, Sections 10--12, the problem of optimal
arbitrage is specified to when $E$ is the interior of the simplex on
$\reals^{d-1}$, that is,
\[
E = \Biggl\{x \in\reals^{d-1}  \bigm|\min_{i=1,\ldots, d - 1}x_i >
0 \mbox{, and }  \sum_{i=1}^{d-1}x_i < 1\Biggr\}.
\]
(In fact, in~\cite{KarFern10a} the simplex \mbox{$\Delta^{d}_{+} :=
\{x \in\reals^{d}  |\min_{i=1,\ldots, d}x_i > 0\mbox{,  and }
\sum_{i=1}^{d}x_i = 1\}$} is used. Since $x =
(x_i)_{i=1, \ldots,
d-1} \in E \Longleftrightarrow(x,1-\sum_{i=1}^{d-1}x_i)
\in
\Delta^{d}_{+}$, $E$ is in trivial one-to-one correspondence with
$\Delta^{d}_{+}$. For the purposes of this paper, the state space has
to be an open set; for this reason, $E$ as defined above will be used
throughout.) The interpretation is that the coordinate process~$X$
represents the
\textit{relative} capitalizations of stocks, and the corresponding optimal
arbitrages are in fact \textit{relative} arbitrages with respect to the market
portfolio. In principle, the treatment of~\cite{KarFern10a} does not really
utilize the special structure of the simplex; therefore, the general
case is
considered.
\end{rem}

It is natural to study the asymptotic behavior of these optimal
arbitrages as
the time-horizon becomes arbitrarily large. It is shown below that, under
suitable assumptions, the sequence of wealth processes $(V^T)_{T \in
\Real_+}$
(parameterized via their maturity) converges to the robust asymptotically
growth-optimal wealth process.

A tool in proving this convergence will be Proposition \ref
{Prelarbrobgrowthprop}. In view of that result, it follows that if
$\blambda> 0$ and $\bprob_{x}[\zeta<\infty] = 0$ for each
$x\in
E$, arbitrage occurs if and only if the local $\bprob_{x}$-martingale
$1 / V^*$ is a strict local $\bprob_{x}$-martingale in the terminology
of~\cite{MR1478722}. If
$1 / V^*$ is a $\bprob_{x}$-martingale, then, even
though arbitrage does not exist, it is still possible to construct
robust growth-optimal trading strategies, as seen in Example
\ref{Exacorrgbmrelcap}.
%
%
\begin{rem}
Equation (\ref{Eqprobexitrep}) holds when $\zeta,\breta,\bV$ and
$\bprob_x$ are replaced by $\zeta_n,\breta_n,\bV_n$ and $\bprob_{x,n}$,
where these quantities appear in the
proof of Theorem~\ref{thmasymptogrowth1}. In
this case, and when $E=(0,\infty)^d$, conditioning upon $\zeta_n > T$
can be interpreted as forcing a diversity condition in
the market since $X\in E_n$ implies there exists some $\delta> 0$ such
that no
one asset's relative capitalization is above $1-\delta$. Conditioned upon
never exiting $E_n$ for $\nin$, the robust growth optimal wealth
process $\bV_n$ is thus
identified with the long-run version of the arbitrage constructed in
\cite{OstRhe}.
\end{rem}

Equation (\ref{Eqprobexitrep}) may be re-written as
%
%
\begin{equation}\label{Eqprobexitrepalt}
e^{\blambda T}\qprob_x[\zeta> T] =
\breta(x)\mathbb{E}^{\bprob}_{x}\biggl[\frac{1}{\breta(X_T)}\biggr].
\end{equation}
Thus, to study the asymptotic behavior of $V^T_t$ as $T\uparrow\infty
$ in
(\ref{eqoptarb}), it is
necessary to study the long-time (as $T\uparrow\infty$) behavior of
$\mathbb{E}^{\bprob}_{x}[(\breta(X_T))^{-1}]$. Assume
that~$X$ is positive
recurrent (or, equivalently, tight) under $(\bprob_x)_{x\in E}$ with
invariant probability measure $\mu$. Under Assumption~\ref{assbasic},
\cite{MR1152459}, Theorem~1.2~(iii), equations
(3.29), (3.30) extends the ergodic result in
(\ref{Eposrecergodicthm}) to functions~$f$ which are integrable with
respect to $\mu$. Thus, for all positive measurable functions
$f\dvtx E\mapsto\reals$,
%
%
\begin{equation}\label{Eposrecergodicthmposf}
\lim_{T\uparrow\infty}\mathbb{E}^{\bprob}_{x}[f(X_T)]=
\int_Ef\,d\mu,
\end{equation}
and this limit is the same for all $x\in E$. This yields the
following proposition:

%
\begin{prop} \label{PQtailsprop2}
Let Assumption~\ref{assbasic} hold. Suppose that $\breta\in
H_{\blambda
}$ is
such that
%
%
\begin{equation}\label{Ebretato0}
\lim_{n\uparrow\infty}\sup_{x\in E_n^c}\breta(x) = 0.
\end{equation}
Then $\bprob_x[\zeta< \infty] = 0$ for all $x \in E$, and the
following are equivalent:
\begin{longlist}[(3)]
\item[(1)] $\lim_{T\uparrow\infty}e^{\blambda T}\qprob_x
[\zeta> T] = \kappa
\breta(x)$ for all $x\in E$ where $\kappa> 0$ does not depend upon~$X$;
\item[(2)] $\limsup_{T\uparrow\infty} e^{\blambda T}\qprob_x
[\zeta> T] <
\infty$ for some $x\in E$;
\item[(3)]~$X$ is positive recurrent under $(\bprob_x
)_{x\in E}$ and
$\int_E(\breta)^{-1}\,d\mu< \infty$ where~$\mu$ is the invariant measure
for~$X$.
\end{longlist}
\end{prop}
%
%
\begin{rem}
Note that $(3)$ implies $(1)$
even if (\ref{Ebretato0}) does not hold. Note
also that, by Example~\ref{Exaqexpllam0} below, some condition like
(\ref{Ebretato0}) is necessary for $(1),(2)$ and $(3)$ to be equivalent.
\end{rem}
\begin{pf*}{Proof of Proposition~\ref{PQtailsprop2}}
Let $x\in E$. Note that $(e^{\blambda t}\breta(X_t))^{-1}$
is a~$\bprob_x$ super-martingale. By (\ref{Ebretato0}), if $\bprob_x
[\zeta< \infty] > 0$, then the super-martingale property would
be violated.
Thus an
explosion cannot occur.

Regarding the equivalences, $(1)\Rightarrow(2)$ is trivial. As for
$(2)\Rightarrow(3)$, if $(2)$ holds, then by (\ref{Eqprobexitrepalt}) it
follows that there is some $T_0\geq0$ such that
\[
\sup_{T\geq T_0} \mathbb{E}^{\bprob}_{x}\biggl[\frac{1}{\breta
(X_T)}\biggr] <
\infty.
\]
Therefore, (\ref{Ebretato0}) yields that
$(X_T)_{T\geq T_0}$ form a $\bprob_x$ tight family of random
variables for each $x\in E$. By Proposition~\ref{Ptighteqposrec} it
follows that~$X$ is positive recurrent under $(\bprob_x
)_{x\in
E}$; hence, (\ref{Eposrecergodicthmposf}) gives
\[
\int_E\frac{1}{\breta}\,d\mu=
\lim_{T\uparrow\infty}\mathbb{E}^{\bprob}_{x}\biggl[\frac
{1}{\breta(X_T)}\biggr]
\leq
\limsup_{T\uparrow\infty}\mathbb{E}^{\bprob}_{x}\biggl[\frac
{1}{\breta(X_T)}\biggr] <
\infty
\]
proving $(3)$. Implication $(3)\Rightarrow(1)$ follows by applying
(\ref{Eposrecergodicthmposf}) to $1/\breta$ and using
(\ref{Eqprobexitrepalt}).
\end{pf*}

The following is the main result of the section.
%
%
\begin{theorem} \label{Tconvtooptarb}
Suppose that $\breta\in H_{\blambda}$ is such that $\bprob
[\zeta< \infty]
= 0$ and that condition $(1)$ in Proposition~\ref{PQtailsprop2} holds. Fix
$\prob\in\Pi$. Then, for any fixed $t \in\Real_+$,
%
%
\begin{equation} \label{eqconvtooptarb1}
\plim_{T \to\infty} \sup_{\tau\in[0, t]} \vert V^T_\tau-
V^*_\tau\vert= 0.
\end{equation}
Additionally, for each $T \in\Real_+$, let $(\vartheta^T_t)_{t \in
[0, T]}$
be a predictable process such that
%
%
\begin{equation}\label{EVrelarbrep}
V^T = 1 + \int_{0}^\cdot V^T_t (\vartheta_t^T)' \,\ud X_t.\vadjust{\goodbreak}
\end{equation}
With $\bell$ as in (\ref{Ebelldef}) and $\vartheta^* = \nabla\bell
(X)$, it follows that, for any fixed $t\in\Real_{+}$,
%
%
\begin{equation} \label{eqconvtooptarb2}
\plim_{T \to\infty} \int_0^t (\vartheta^T_\tau- \vartheta
^*_\tau)'
c(X_\tau) (\vartheta^T_\tau- \vartheta^*_\tau) \,\ud
\tau= 0.
\end{equation}
\end{theorem}
\begin{pf}
Fix $t \in\Real_+$. Equation (\ref{eqoptarb}), coupled with
condition (1)
in Proposition
\ref{PQtailsprop2}, implies that $\plim_{T \to\infty} V^T_t =
V^*_t$. Let $Z^T = (Z^T_\tau)_{\tau\in[0, t]}$ be defined via
$Z^T_\tau:= V^T / V^*$. The arguments used in the proof of Theorem
\ref{thmasymptogrowth1} show that $V^*$ is the num\'eraire portfolio in $\V$
under $\prob^*$, that is, that $Z^T$ is a~nonnegative $\prob
^*$-supermartingale on $[0,t]$ for
all $T \in(t, \infty)$. Then~\cite{Kard10}, Theorem 2.5, implies that
$\prob^*$-$\lim_{T \to\infty} \sup_{\tau\in[0, t]} \vert
Z^T_\tau- 1\vert=
0$. Using the fact that $\prob^* [\inf_{\tau\in[0, t]}
V^*_\tau> 0] =1$,
it follows that $\prob^*$-$\lim_{T \to\infty} \sup_{\tau\in[0,t]}
\vert V^T_\tau- V^*_\tau\vert= 0$. Now, with $R^T =
(R^T_\tau)_{\tau\in[0,t]}$ defined via
\[
R^T = \int_0^\cdot(\vartheta^T_s - \vartheta^*_s)'
\bigl(\ud X_s - c(X_s) \nabla\ell^* (X_s) \,\ud s\bigr),
\]
it holds that $Z^T = 1 + \int_0^\cdot Z^T_s \,\ud R_s$. Invoking
\cite{Kard10}, Theorem 2.5, again yields $\prob^*$-$\lim_{T \to\infty}
[R^T, R^T]_t = 0$ for all $t \in\Real_+$. As
\[
[R^T, R^T]_t = \int_0^t (\vartheta^T_\tau- \vartheta^*_\tau
)'
c(X_\tau) (\vartheta^T_\tau- \vartheta^*_\tau) \,\ud
\tau,
\]
(\ref{eqconvtooptarb2}) follows, with $\prob^*$ replacing
$\prob$ there.

Up to now, the validity of both (\ref{eqconvtooptarb1})
and (\ref{eqconvtooptarb2}), for the special case $\prob= \prob^*
\in
\Pi$ has been shown. For a general $\prob\in\Pi$, the result follows
by noting that~$\prob^*$ and
$\prob$ are equivalent on each $\F_{\zeta_n}$, $\nin$, and that
$\limn
\prob
[\zeta_n > t] = 1$. Indeed, for any $\epsilon> 0$ pick
$n_\epsilon
\in\Natural$ large enough so that $\prob
[\zeta_{n_{\epsilon}} \leq t] \leq\epsilon/ 2$. Then pick
$\delta
_\epsilon> 0$ so that $
\prob[A] \leq\epsilon/2$ holds whenever $A \in\F_{\zeta
_{n_{\epsilon
}}}$ and $\prob^*[A] \leq\delta_\epsilon$. Finally, pick
$T_\epsilon
\in\Real_+$ large enough so that
\[
\prob^* \Bigl[{\sup_{\tau\in[0, t]}} \vert V^T_\tau-
V^*_\tau\vert\geq\epsilon\Bigr] \leq\delta_\epsilon
\]
as well as
\[
\prob^* \biggl[\int_0^t (\vartheta
^T_\tau- \vartheta^*_\tau)' c(X_\tau) (\vartheta
^T_\tau- \vartheta^*_\tau) \,d \tau\geq\epsilon\biggr] \leq
\delta
_\epsilon
\]
holds whenever $T \geq T_\epsilon$. Therefore, for all $T \geq
T_\epsilon$,
\begin{eqnarray*}
\prob\Bigl[{\sup_{\tau\in[0, t]}} \vert V^T_\tau- V^*_\tau
\vert\geq\epsilon\Bigr] &\leq& \prob\Bigl[{\sup_{\tau\in
[0, \zeta_n \wedge t]}} \vert V^T_\tau- V^*_\tau\vert
\geq\epsilon\Bigr] + \prob
[\zeta_{n_{\epsilon}} \leq t] \\
&\leq& \epsilon/ 2 + \epsilon/2  =  \epsilon.
\end{eqnarray*}
This establishes (\ref{eqconvtooptarb1}). Similarly, we establish
(\ref{eqconvtooptarb2}).
\end{pf}
%
%
\begin{rem}
The result of Theorem~\ref{Tconvtooptarb} is expected to hold in
greater generality than its assumptions suggest. It is conjectured that the
results hold under Assumption~\ref{assbasic},\vadjust{\goodbreak} but it is an open
question. See Example~\ref{Exanotinteg} in Section~\ref{Sexamples}
for a
potential counterexample. The next example shows that it can even hold when
$\lambda^* = 0$.
\end{rem}
%
%
\begin{exa}\label{Exaqexpllam0}
Let $E = (0, \infty)$ and $c (x) = 1$ for $x \in E$. It is
straightforward to check that
\[
U(T, x) = \qprob_x [\zeta> T] = 2 \Phi\bigl(x / \sqrt
{T}\bigr) - 1\qquad
\mbox{for } (T, x) \in\Real_+ \times E,
\]
where $\Phi$ is the cumulative distribution function of the standard normal
law. With $x_0 = 1$, it follows that
\[
V^T_t = \frac{2 \Phi(X_t / \sqrt{T - t}) - 1}{2 \Phi
(1 / \sqrt{T}) - 1}\qquad \mbox{for } t \in[0, T].
\]
From this explicit formula it is straightforward that $\plim_{T \to
\infty} \sup_{\tau\in[0, t]} \vert V^T_\tau- X_\tau\vert= 0$ holds
whenever $t \in \Real_+$. Observe that $V^* = X$ exactly for the choice
$\eta^* (x) = x$ corresponding to $\lambda^* = 0$, and $\prob^*$ being
the probability that makes~$X$ behave as a three-dimensional Bessel
process. Remember that in this example the dimensionality of the set of
principal eigenfunctions is two---the other one is $\eta\equiv1$. It is
interesting to note that the sequence $(V^T)$ ``chooses'' to converge
to the optimal strategy of the optimal probability $\prob^*$ that
satisfies $\prob^* \in\Pi$.
\end{exa}

As in~\cite{KarFern10b}, Section 5.1,
for $T\in\Real_{+}$ and $x\in E$, define the measure $\prob^{\star
,T}_x$ on~$\F_T$ via
\[
\prob^{\star,T}_x [A] = \qprob_x[A \such\zeta> T]\qquad
\mbox{for } A\in\F_T.
\]
It is shown therein that, for each $t \in[0, T]$ and $x\in E$,
\[
\frac{d\prob^{\star,T}_x}{d\qprob_x}\bigg|_{\F_t} =
\frac{U(T-t,X_t)}{U(T,x)}\indic_{\{\zeta> t\}}.
\]
Furthermore, under the assumption $U\in C^{1,2}((0,T)\times E)$, the
coordinate process~$X$ under
$( \prob^{\star,T}_x )_{x\in E}$ has dynamics on $[0,T]$ of
\begin{eqnarray*}
\ud X_{\tau} &=&
c(X_\tau)\frac{\nabla_{x}U(T-\tau,X_\tau)}{U(T-\tau,X_{\tau
})}\,d\tau+
\sigma(X_{\tau})\,\ud W^{\prob^{\star,T}}_{\tau}\\
&=&c(X_\tau)\vartheta^T_\tau \,d\tau+
\sigma(X_{\tau})\,\ud W^{\prob^{\star,T}}_{\tau}
\end{eqnarray*}
using the notation of (\ref{EVrelarbrep}) in Theorem \ref
{Tconvtooptarb}. Assuming
$\bprob_x[\zeta< \infty] = 0$, it follows that $\prob^{\star,T}_x$ and
$\bprob_x$ are equivalent on $\F_t$ for $t \in[0,T]$ with
%
%
\begin{eqnarray}\label{Ebprobpstarrnderiv}
\frac{d\prob^{\star,T}_x}{d\bprob_x}\bigg|_{\F_t} &=&
\exp\biggl( - \frac{1}{2} \int_0^t (\vartheta^T_\tau-
\vartheta^{*}_\tau)' c(X_\tau) (\vartheta^T_\tau-
\vartheta^{*}_\tau) \,d \tau\nonumber\\[-8pt]\\[-8pt]
&&\hspace*{59pt}{} + \int_0^t(\vartheta^T_\tau-
\vartheta^{*}_\tau)'\sigma(X_\tau) \,\ud
W^{\bprob}_\tau\biggr).\nonumber
\end{eqnarray}
Thus the results of Theorem~\ref{Tconvtooptarb} immediately imply the
following:
%
%
\begin{prop}\label{PTcondvarnormconvergence}
Suppose the hypotheses of Theorem~\ref{Tconvtooptarb} hold. Then, for any
$t\in\Real_{+}$, $\prob^{\star,T}_x$ converges in variation
norm to $\bprob_x$ on $\F_t$ as $T\uparrow\infty$.
\end{prop}
\begin{pf}
The process on the right-hand side of (\ref{Ebprobpstarrnderiv}) is the
process $Z^T = V^T / V^*$ in the proof of Theorem~\ref{Tconvtooptarb}.
Since, for
each $A\in\F_t$,
\[
|\prob^{\star,T}_x(A) - \bprob_x(A)| \leq
\mathbb{E}^{\bprob}_{x}[|Z^T_t-1|],
\]
the result follows from~\cite{Kard10}, Theorem 2.5(i).
\end{pf}
%
%
\begin{rem}
In~\cite{MR781410}, a similar result to Proposition
\ref{PTcondvarnormconvergence} is obtained, though not in the setting of
convergence of relative arbitrages. Namely, it is assumed that
%
%
\begin{equation}\label{Egradlogconvcond}
\lim_{T\uparrow\infty}\frac{\nabla_{x}U(T,x)}{U(T,x)} = \nabla
\bell(x)\qquad
\mbox{for } x \in E,
\end{equation}
where the convergence takes place exponentially fast with rate
$\blambda
$ and
is uniform on compact subsets of $E$. Under this assumption, the measures~$\prob^{\star,T}_x$ are shown to weakly converge as $T \uparrow
\infty$
to $\bprob_x$ on $\F_t$ for
each $t\in\Real_+$.

In the case where $E$ is bounded with smooth boundary, and $c$ is uniformly
elliptic over $E$, (\ref{Egradlogconvcond}) holds if there exists a
function $H\dvtx E\mapsto\reals$ such that, for each $i=1,\ldots, d$,
\[
\sum_{j=1}^{d}c_{ij}(x)\,\frac{\partial}{\partial_{x_j}}H(x) =
f_{i}(x);\qquad  f_i(x):= -\frac{1}{2}\sum_{j=1}^{d}\frac{\partial
}{\partial_{x_j}}c_{ij}(x),\qquad i=1,\ldots,d.
\]
In vector notation, this gradient condition takes the form $\nabla H =
c^{-1}f$, and~$f$ is the \textit{Fichera drift} associated to
$\qprob$. Under
this hypothesis, the measure $m(dx) = \exp(2H(x))\,dx$ is
reversing for the
transition probability function $\qprob(t,x,\cdot)$, and the
convergence result
in (\ref{Egradlogconvcond}) follows by representing $U(T,x) =
\qprob_x [\zeta> T]$ as an eigenfunction expansion where the
underlying space
is $L^2(E,m)$; see~\cite{MR781410}.
\end{rem}

\section{A thorough treatment of the one-dimensional case}
\label{Sone-dim-case}

This section considers the case $d=1$, where $E=(\alpha,\beta)$ is a
bounded interval. If $E=\reals$, then $\blambda= 0$ holds by
Proposition~\ref{Pblambdatest}, because the coordinate process under
$\qprob$ is recurrent. If $E$ is a half-bounded interval, it is
possible for:
\begin{itemize}
\item$\blambda= 0$, even though there is explosion under $\qprob$;
see Example~\ref{Exaqexpllam0}.
\item$\blambda> 0$, even though there is no explosion under $\qprob$;
see Example~\ref{Exacorrgbm} with $d=1$.
\end{itemize}
Hence making a general statement connecting $\blambda> 0$ with explosion
or nonexplosion under $\qprob$ is difficult. Thus to enlighten the
connections with relative arbitrages, the following will assumed
throughout the
section:\vadjust{\goodbreak}
%
%
\begin{ass}\label{Aonedim}
Assumption~\ref{assbasic} holds for $E=(\alpha,\beta)$ with $-\infty<
\alpha< \beta< \infty$.
\end{ass}

Under the validity of Assumption~\ref{Aonedim}, results are provided
that almost completely cover all the cases that can occur.

The first proposition establishes point-wise tests for $c$ which
yield $\blambda> 0$ or $\blambda= 0$. The second
proposition gives integral tests which yield $\blambda>
0$ or \mbox{$\blambda= 0$}. Condition (\ref{Eintblambdapossl}) is equivalent to
the coordinate process~$X$ under $(\qprob_x)_{x\in
[\alpha
,\beta]}$, exploding
to both $\alpha,\beta$ with positive probability. Additionally,
condition (\ref{Eintblambdapossl}) not
only yields $\blambda> 0$ but also that $\bprob\in\bPi_{\mathrm{a.s.}}$ (and hence
$\bprob\in\bPi$).

Recall the following facts regarding explosion, transience,
recurrence and positive recurrence in the one-dimensional case under
Assumption~\ref{Aonedim}; see~\cite{MR1326606}, Chapter 5.1:
\begin{itemize}
\item Since $E$ is bounded the
coordinate process~$X$ under $(\qprob_x)_{x\in[\alpha
,\beta
]}$ is
transient. Furthermore it explodes to $\alpha$ and/or $\beta$ with positive
probability if, for some $x_0\in(\alpha,\beta)$,
\[
\int_\alpha^{x_0}\frac{x-\alpha}{c(x)}\,dx<\infty
\quad\mbox{and/or}\quad
\int_{x_0}^\beta\frac{\beta-x}{c(x)}\,dx<\infty.
\]
\item The coordinate process~$X$ under $(\bprob_x)_{x\in
(\alpha,\beta)}$ is recurrent if
%
%
\begin{equation}\label{Ebprobrecurrent}
\int_\alpha^{x_0} \frac{1}{(\breta(x))^2}\,dx = \infty
\quad\mbox{and}\quad
\int_{x_0}^\beta\frac{1}{(\breta(x))^2}\,dx = \infty.
\end{equation}
If either of the integrals in (\ref{Ebprobrecurrent}) are finite, then the
coordinate process~$X$ is transient towards the endpoint with finite
integral.
\item The coordinate process~$X$ under
$(\bprob_x)_{x\in(\alpha,\beta)}$ is positive
recurrent if
(\ref{Ebprobrecurrent}) holds and if
%
%
\begin{equation}\label{Ebprobposrecurrent}
\int_\alpha^\beta\frac{(\breta(x))^2}{c(x)}\,dx < \infty.
\end{equation}
\end{itemize}
%
%
\begin{prop}[(Pointwise result)]\label{Ppwprop}
Let Assumption~\ref{Aonedim} hold. If
%
%
\begin{equation}\label{Epwblambdapos}
\sup_{x\in(\alpha,\beta)}\frac{(x-\alpha)^2(\beta-x)^2}{c(x)}<
\infty,
\end{equation}
then $\blambda> 0$. If
%
%
\begin{equation}\label{Epwblambdazero}
\lim_{x\downarrow\alpha}\frac{(x-\alpha)^2}{c(x)} = \infty
\quad\mbox{or}\quad \lim_{x\uparrow
\beta}\frac{(\beta-x)^2}{c(x)} = \infty,
\end{equation}
then $\blambda= 0$.
\end{prop}
%
%
\begin{rem}
We thank an anonymous referee for suggesting the short, self-contained proof
to Proposition~\ref{Ppwprop} below.\vadjust{\goodbreak}
\end{rem}
\begin{pf*}{Proof of Proposition~\ref{Ppwprop}}
By~\cite{MR1326606}, Theorem 4.4.5
(note that $\lambda_c$ from~\cite{MR1326606},
Theorem 4.4.5, is equal to $-\blambda$ here), $\blambda$ admits the
following variational representation:
%
%
\begin{equation}\label{Eblambdapw01rep}
\blambda= \sup_{\stackrel{\eta\in C^2(\alpha,\beta)}{\eta>
0}}\inf
_{x\in
(\alpha,\beta)} \frac{-c(x)\eta''(x)}{2\eta(x)},
\end{equation}
where the $'$ symbol is used to signify a derivative with respect to
$x$ (and
not to denote matrix transposition as it was used in previous sections).

Let $\eta(x) = \sqrt{(x-\alpha)(\beta-x)}$.
If (\ref{Epwblambdapos}) holds, then
\[
\inf_{x\in
(\alpha,\beta)} \frac{-c(x)\eta''(x)}{2\eta(x)} = \inf_{x\in
(\alpha
,\beta)}
\frac{(\beta-\alpha)^2c(x)}{8(x-\alpha)^2(\beta-x)^2} > 0
\]
and hence $\blambda> 0$.

Now, assume (\ref{Epwblambdazero}) holds for $x\downarrow
\alpha$. The proof for $x\uparrow\beta$ is the same. Let \mbox{$a>\alpha$},
and consider the case when
$c\equiv1$ and $E=(\alpha,a)$. Since Assumption~\ref{assbasic} clearly
holds in this
setting, let $\blambda_a$ represent the generalized
principle eigenvalue. Set $\lambda_a = \frac{\pi^2}{2(a-\alpha)^2}$ and
consider the function $\phi(x) =
\sin(\sqrt{2\lambda_a}(x-\alpha))$. It can be directly verified
that $-\frac{1}{2}\phi''(x) = \lambda_a\phi(x)$ and that
both~(\ref{Ebprobrecurrent}) and~(\ref{Ebprobposrecurrent}) hold [with $c\equiv1$, $\beta$ replaced
by $a$
and $x_0\in(\alpha,a)$]. Thus, Proposition~\ref{Pblambdatest} implies
that $\blambda_a = \lambda_a = \frac{\pi^2}{2(a-\alpha)^2}$. Plugging
this into~(\ref{Eblambdapw01rep}) (again, for $c\equiv1$ and~$\beta$ replaced by $a$) gives for all $\eta\in C^2(\alpha,a), \eta
> 0$
%
%
\begin{equation}\label{Eetaabpiresult}
\inf_{x\in(\alpha,a)}\frac{-\eta''(x)}{2\eta(x)}\leq\frac{\pi
^2}{2(a-\alpha)^2}.
\end{equation}

Now, for the general case, it is clearly true that $\blambda\geq
0$. Assume by way of contradiction that $\blambda> 0$. By
(\ref{Eblambdapw01rep}) it follows that there exists a $\tilde{\lambda}>0$
and $\eta\in C^2(\alpha,\beta), \eta> 0$ such that
%
%
\begin{equation}\label{Ecpwprooftempval}
\tilde{\lambda}\leq\inf_{x\in(\alpha,\beta)}\frac{-c(x)\eta
''(x)}{2\eta(x)}.
\end{equation}
Let $M>0$. Since (\ref{Epwblambdazero}) holds, there is an $\alpha_M$ such
that for $x\in(\alpha,\alpha_M)$,
%
%
\begin{equation}\label{EtempMbound}
M\leq\frac{(x-\alpha)^2}{c(x)}\leq\frac{(\alpha_M-\alpha)^2}{c(x)}.
\end{equation}
Together, (\ref{Ecpwprooftempval}) and (\ref{EtempMbound}) give
%
%
\begin{equation}\label{Ecpwprooftempval2}
\frac{\tilde{\lambda}M}{(\alpha_M-\alpha)^2}\leq\inf_{x\in
(\alpha
,\alpha_M)}\frac{\tilde{\lambda}}{c(x)}\leq\inf_{x\in(\alpha
,\alpha
_M)}\frac{-\eta''(x)}{2\eta(x)}.
\end{equation}
By (\ref{Eetaabpiresult}) with $a=\alpha_M$, it follows that
%
%
\begin{equation}\label{Ecpwprooftempval3}
\inf_{x\in(\alpha,\alpha_M)}\frac{-\eta''(x)}{2\eta(x)}\leq
\frac{\pi^2}{2(\alpha_M-\alpha)^2}.
\end{equation}
Combining (\ref{Ecpwprooftempval2}) and (\ref{Ecpwprooftempval3}) gives
\[
\frac{\tilde{\lambda}M}{(\alpha_M-\alpha)^2} \leq
\frac{\pi^2}{2(\alpha_M - \alpha)^2}
\]
or that $M\leq\pi^2 /(2\tilde{\lambda})$. This is a contradiction
since $M$ was
arbitrary. Thus $\blambda= 0$.
\end{pf*}

The proof of the following result is lengthy and technical; for this
reason, it is delayed until Section~\ref{Sonedimproofs}.
%
%
\begin{prop}[(Integral result)]\label{Pintprop}
Let Assumption~\ref{Aonedim} hold. If
%
%
\begin{equation}\label{Eintblambdapossl}
\int_\alpha^\beta\frac{(x-\alpha)(\beta-x)}{c(x)}\,dx < \infty,
\end{equation}
then:
\begin{longlist}[(4)]
\item[(1)] $\blambda> 0$.
\item[(2)] For any $\breta\in H_{\blambda}$, $\lim_{x\downarrow
\alpha
}\breta(x) = 0 = \lim_{x\uparrow\beta}\breta(x)$.
\item[(3)] For any $\breta\in H_{\blambda}$, the coordinate process
$X$ under $(\bprob_x)_{x\in(\alpha,\beta)}$ is positive
recurrent and so by Proposition~\ref{Pblambdatest}, $\breta$ is unique
up to multiplication by a positive constant.
\item[(4)] $\bprob\in\bPi_{\mathrm{a.s.}}$ and hence $\bprob\in\bPi$.
\end{longlist}
If, for some $a\in
(\alpha,\beta)$,
%
%
\begin{equation}\label{Eintblambdazero}
\int_{\alpha}^{a}\frac{(x-\alpha)^2}{c(x)}\,dx = \infty
\quad\mbox{or}\quad \int_{a}^{\beta}\frac{(\beta-x)^2}{c(x)}\,dx
= \infty,
\end{equation}
then $\blambda= 0$.
\end{prop}

\section{Examples}\label{Sexamples}

\subsection{One-dimensional examples}\label{SSonedimex}

The following examples display a variety of outcomes regarding $\breta
$ and
$\bprob$. Proofs of all the statements follow from Propositions
\ref{Ppwprop},~\ref{Pintprop} and/or from the tests for recurrence, null
recurrence or positive recurrence under $\bprob$ given in equations
(\ref{Ebprobrecurrent}) and (\ref{Ebprobposrecurrent}) in conjunction
with Proposition~\ref{Pblambdatest}.

The first three Examples~\ref{Exa1}--\ref{Exa3}, all consider
$E=(0,1)$ and display the different possible outcomes depending upon
the rate of decay (to zero) of $c$ at $0$ and~$1$. The fourth Example
\ref{Exanotinteg} shows that it is possible that $\blambda> 0$,
$\bprob\in\bPi_{\mathrm{a.s.}}$, and the coordinate process is positive
recurrent under $\bprob$, while $(\breta)^{-1}$ fails to be integrable
with respect to the invariant measure under $\bprob$; thus, the results
of Section~\ref{Srelarb}, and in particular Theorem \ref
{Tconvtooptarb}, are not applicable. Finally, Example \ref
{Exaimpossible} shows that even if $\blambda> 0$, there is no explosion
of~$X$ under $\bprob$ and $\breta$ is unique (up to multiplication), no
conclusion can be made as to if $\bprob\in\bPi _{\mathrm{a.s.}}$ or
$\bprob\in\bPi$, based on results of this article.
%
%
\begin{exa}\label{Exa1}
Let $E=(0,1)$ and $c(x) = x(1-x)$. Then:

\begin{itemize}
\item Equation (\ref{Eintblambdapossl}) holds and so the results of Proposition
\ref{Pintprop} follow.
\item$\breta(x) = x(1-x)$, $\blambda= 1$.
\item Equation (\ref{Ebretato0}) holds as well as condition $(3)$ in Proposition
\ref{PQtailsprop2}. Thus the results of
Theorem~\ref{Tconvtooptarb} and Proposition \ref
{PTcondvarnormconvergence} follow.
\end{itemize}
\end{exa}
%
%
\begin{exa}\label{Exa2}
Let $E=(0,1)$ and $c(x) = x^2(1-x)^2$. Then:

\begin{itemize}
\item$\qprob[\zeta< \infty] = 0$.
\item$\breta(x) = \sqrt{x(1-x)}$, $\blambda= 1/8$.
\item The coordinate process~$X$ is null recurrent under
$(\bprob_x)_{x\in E}$; however, $\bprob\in\bPi_{\mathrm{a.s.}}$.
\end{itemize}

Note that there is a multidimensional generalization of this in Example~\ref{Exacorrgbmrelcap}.
\end{exa}
%
%
\begin{exa}\label{Exa3}
Let $E=(0,1)$ and $c(x) = x^3(1-x)^3$. Then:

\begin{itemize}
\item$\qprob[\zeta< \infty] = 0$.
\item$\blambda= 0$ by either Proposition~\ref{Ppwprop} or
\ref{Pintprop}.
\item$\breta$ can be any affine function $\alpha+ \beta x$ such that
$\breta
>0$ on $(0,1)$. For any such $\breta$, $\bprob\in\bPi_{\mathrm{a.s.}}$.
\end{itemize}
\end{exa}
%
%
\begin{exa}\label{Exanotinteg}
Let $E=(0,\hat{x})$, where
\[
\hat{x} :=\min\biggl\{x > 0 \Bigm|\int_0^x\log(-{\log}
(y))\,dy
=0\biggr\}\approx0.75.
\]
Furthermore, let $c \dvtx E \mapsto\Real_+$ be defined via
\[
c(x) = -2x\log(x)\int_0^x\log(-{\log}(y))\,dy\qquad \mbox
{for }
x \in E.
\]
Then:
\begin{itemize}
\item Equation (\ref{Eintblambdapossl}) holds and so the results of Proposition
\ref{Pintprop} follow.
\item$\breta(x) = \int_0^{x}\log(-\log(y))\,dy$,
$\blambda= 1$.
\item$(\breta)^{-1}$ is not integrable with respect to the
invariant measure for $\bprob$.
\end{itemize}
\end{exa}
%
%
\begin{exa}\label{Exaimpossible}
Let $E=(0,\infty)$ and
\[
c(x) = \frac{4(x^{3/2}\int_0^x\cos(y^{-1/2})\,dy +
4x^2 -
x^{5/2})}{2-\sin(x^{-1/2})}\qquad  \mbox{for } x
\in E.
\]
Then:
\begin{itemize}
\item$\qprob[\zeta< \infty] = 0$.
\item$\breta(x) = \int_0^x\cos(y^{-1/2})\,dy + 4\sqrt
{x} - x$,
$\blambda= 1$.
\item The coordinate process~$X$ under $(\bprob_x)_{x\in
E}$ is
null-recurrent. No conclusions as to whether or not
$\bprob\in\bPi_{\mathrm{a.s.}}$ or $\bPi$ can be drawn based on the results\vadjust{\goodbreak}
of the
paper (see Propositions~\ref{Phopfprop} and~\ref{Prelarbrobgrowthprop})
since
\begin{eqnarray*}
\limsup_{x\downarrow0}\biggl(\frac{1}{2}\nabla\bell(x)'c(x)\nabla
\bell
(x)-\blambda\biggr)
&=& 0, \\
\liminf_{x\downarrow0}\biggl(\frac{1}{2}\nabla\bell(x)'c(x)\nabla
\bell
(x)-\blambda\biggr)
&=& -\frac{2}{3}.
\end{eqnarray*}
\end{itemize}
\end{exa}

\subsection{Multi-dimensional examples}

The following examples show that the optimal $\breta$ need not vanish
on the
boundary of $E$ even when $E$ is bounded, and that strictly positive
asymptotic growth rate is
possible even when $\qprob[\zeta< \infty] = 0$.
%
%
\begin{exa}[(Correlated geometric Brownian motion)]\label{Exacorrgbm}
Let $E = (0,\infty)^{d}$, and define the matrix $c$ via
\[
c_{ij}(x) = x_i x_j A_{ij}, \qquad 1\leq i,j\leq d,
\]
where $A$ is a symmetric, strictly positive definite $d\times d$
matrix. Define the vectors $\hat{A}, \hat{B} \in\reals^d$ by
\[
\hat{A}_i = A_{ii}\qquad (1\leq i\leq d),\qquad  \hat{B} =
\tfrac{1}{2}A^{-1}\hat{A}.
\]
Then
%
%
\begin{equation}\label{Egbmbreta}
\breta(x) = \prod_{i=1}^{d} x_i^{\hat{B}_i},\qquad  \blambda=
\frac{1}{8}\hat{A}'A^{-1}\hat{A},
\end{equation}
and $\bprob\in\bPi_{\mathrm{a.s.}}$.

To see the validity of the above claims, set $\eta$, $\lambda$ as the
respective right-hand sides of (\ref{Egbmbreta}). A straightforward
calculation shows that
$L\eta= -\lambda\eta$ and hence that $\blambda\geq\lambda$. Set
$(\prob^{\eta}_{x})_{x\in\hE}$ as the solution to
the generalized martingale problem for~$L^{\eta}$, as in
(\ref{Edoobetatransform}) and $\prob^{\eta} =
\prob^{\eta}_{x_0}$. The coordinate process~$X$ under~$\prob^{\eta
}$ is
given by $X = \exp(a W)$ where $a$ is the unique positive
definite square
root of $A$ and $W$ a Brownian motion under $\prob^{\eta}$. Thus, under
$\prob^{\eta}$,
\[
\frac{1}{t}\log\eta(X_t) = \frac{1}{t}\hat{B}'aW_t.
\]
The strong law of large numbers for Brownian motion gives that $\prob
^{\eta}\in\bPi_{\mathrm{a.s.}}$. Theorem~\ref{thmasymptogrowth1} then yields
$\blambda\leq\sup_{V \in\V} g(V; \prob^{\eta}) \leq\lambda$,
and hence $\blambda=\lambda$, $\breta= \eta$ and $\bprob=\prob
^{\eta}$.
\end{exa}
%
%
\begin{exa}[(Relative capitalizations of a correlated geometric Brownian
motion)]\label{Exacorrgbmrelcap}
For $d \geq2$, let
\[
E = \Biggl\{x\in\reals^{d-1} \bigm| \min_{i=1,\ldots,d-1} x_i > 0;
\sum_{i=1}^{d-1}x_i < 1\Biggr\}.
\]
For the matrix $A$ of Example~\ref{Exacorrgbm}, define the
$(d-1)$-dimensional square matrix $\mathcal{A}$ by
\[
\mathcal{A}_{ij} = A_{ij} - A_{id} - A_{jd} + A_{dd},\qquad 1\leq
i,j\leq d-1,
\]
and the matrix $c$ via
\[
c_{ij}(x) = x_i
x_j\bigl(\mathcal{A}_{ij}-(\mathcal{A}x)_i -
(\mathcal{A}x)_j + x'\mathcal{A}x\bigr),\qquad
1\leq i,j\leq d-1.
\]
Set the $(d-1)$-dimensional vectors
\[
\hat{\mathcal{A}}_i = \mathcal{A}_{ii} \qquad(1\leq i\leq d-1),\qquad
\hat{\mathcal{B}} = \tfrac{1}{2}\mathcal{A}^{-1}\hat{\mathcal{A}}.
\]
Then
%
%
\begin{equation}\label{Egbmrelcapbreta}\quad
\breta(x)
=\Biggl(\prod_{i=1}^{d-1}x_i^{\hat{\mathcal{B}}_i}\Biggr)
\Biggl(1-\sum
_{i=1}^{d-1}x_i\Biggr)^{1-\sum_{i=1}^{d-1}\hat{\mathcal
{B}}_i},\qquad
\blambda=
\frac{1}{8}\hat{\mathcal{A}}'\mathcal{A}^{-1}\hat{\mathcal{A}},
\end{equation}
and $\bprob\in\bPi_{\mathrm{a.s.}}$. Furthermore, the
coordinate process under $\bprob$ on the simplex has the same
dynamics as the coordinate process under $\bprob$ in Example~\ref{Exacorrgbm}
moved to the simplex.

To prove the validity of the claims, set $\eta,\lambda$ as
the right-hand sides of (\ref{Egbmrelcapbreta}), that is,
\[
\eta(x)
=\Biggl(\prod_{i=1}^{d-1}x_i^{\hat{\mathcal{B}}_i}\Biggr)
\Biggl(1-\sum
_{i=1}^{d-1}x_i\Biggr)^{1-\sum_{i=1}^{d-1}\hat{\mathcal{B}}_i}
\qquad\mbox{for } x \in E,
\lambda=
\frac{1}{8}\hat{\mathcal{A}}'\mathcal{A}^{-1}\hat{\mathcal{A}}.
\]
A long calculation shows that $L\eta= -\lambda\eta$. Let
$(\prob^{\eta}_x)_{x\in\hE}$ be the solution to the
generalized martingale problem for $L^{\eta}$ as in (\ref
{Edoobetatransform}), and set $\prob^{\eta} = \prob^{\eta}_{x_0}$.

Rewrite $\tilde{\prob}^{*}$ for the
probability measure $\bprob$ of Example~\ref{Exacorrgbm}, and let
$\tilde{X}$ be the coordinate process taking values in $(0,\infty)^d$.
As shown in Example~\ref{Exacorrgbm}, $X = \exp(a W^{\tilde{\prob}^*})$, where $a$ is the
unique positive definite square root of $A$, and~$W^{\tilde{\prob}^*}$
is a standard Brownian motion under $\tilde{\prob}^*$. Let $\tilde
{Y} =
\tilde{X}/(\mathbf{1}_{d}'\tilde{X})$, where~$\mathbf{1}_d$ is the
vector of all $1$'s in $\reals^d$, and define $Y=(\tilde
{Y}_1,\ldots
,\tilde{Y}_{d-1})$, which is an $E$-valued process. Note that
$\tilde{Y}$ be
recovered from $Y$ since $\tilde{Y}_d = 1 - \sum_{i=1}^{d-1}Y_i$. Using
It\^o's
formula it can be shown that $Y$ has dynamics
\[
dY_t = c(Y_t)\frac{\nabla{\eta}(Y_t)}{\eta(Y_t)}\,\ud t + \tilde{\sigma
}(Y_t)\,\ud W^{\tilde{\prob}^*}_t,
\]
where $\tilde{\sigma}$ is the $(d-1)\times d$ matrix given by
\[
\tilde{\sigma}(x)_{ij} = x_i\Biggl(a_{ij} - \sum_{l=1}^{d-1}x_l
a_{lj} -
\Biggl(1-\sum_{l=1}^{d-1}x_l\Biggr)a_{dj}\Biggr)\qquad \mbox{for } x\in E.
\]
It can be verified that $\tilde{\sigma}\tilde{\sigma}' = c$---indeed,
this is how $c$ was constructed. Thus,
using the one-to-one correspondence between weak solutions of SDEs and
solutions to the Martingale problem (\cite{RW2000}, Chapter 5.4) and the
uniqueness of solutions to the Martingale problem under Assumption \ref
{assbasic} (\cite{MR1326606}, Theorem~1.12.1), it follows that $\prob
^{\eta}[A] = \tilde{\prob}^*[Y\in A]$ holds for all $A\in\F$. Since
$\tilde{X} = \exp(aW^{\tilde{\prob}^*})$, it follows that
$\log
\eta(Y) = \hat{\beta}(*)'a W^{\tilde{\prob}^*} -
\log(\mathbf{1}_d'e^{aW^{\tilde{\prob}^*}})$,
where
\[
\hat{\beta}(*)_i = \hat{\beta}_i,\qquad 1\leq i\leq d-1,\qquad
\hat{\beta}(*)_d = 1-\sum_{j=1}^{d-1}\hat{\beta}_j.
\]
Thus it follows that $\tilde{\prob}^{*}$-a.s., $\lim_{t\uparrow
\infty
}\frac{1}{t}\log\eta(Y_t)= 0$. Hence, with~$X$ denoting the coordinate
process in $E$, $\lim_{t\uparrow\infty}\frac{1}{t}\log\eta(X_t) = 0$
holds $\prob^{\eta}\mbox{-a.s.}$, which implies that $\prob^{\eta
}\in
\bPi_{\mathrm{a.s.}}$. The same
argument as in Example~\ref{Exacorrgbm} yields the
optimality of $\eta$, $\lambda$ and~$\prob^{\eta}$.
\end{exa}

\subsubsection*{An interesting numerical example}

Using the same notation as in Examples~\ref{Exacorrgbm} and
\ref{Exacorrgbmrelcap}, consider for $d=3$ the matrix $A$ and
associated vectors
$\hat{B}$, $\hat{\mathcal{B}}$ given by
\[
A = \pmatrix{ 5/3 & 3 & 0 \cr 3 & 7 & 0 \cr 0 & 0 & 1},\qquad
\hat{B}
=\pmatrix{ -7/4\cr 5/4\cr 1/2}, \qquad\hat{\mathcal{B}} =\pmatrix{
-1\cr 1}.
\]
The eigenvalues of $A$ are $1$ and $13/3(1\pm\sqrt
{145/169}
)$, and hence
$A$ is positive definite. The $\breta$ from (\ref{Egbmbreta}) and
(\ref{Egbmrelcapbreta}), respectively, are
\begin{eqnarray*}
\breta(x,y,x) &=& \sqrt[4]{\frac{y^5z^2}{x^7}}\qquad \mbox{for }
(x,y,z)\in
(0,\infty)^{3},\\
\breta(x,y) &=& \frac{y(1-x-y)}{x}\qquad  \mbox{for } x >0,   y>
0,   x+y<1.
\end{eqnarray*}
Therefore, $\breta$ goes to $\infty$ along the boundary of $E$ in each
case, even when the region is bounded.

\section{\texorpdfstring{Proof of Proposition \protect\ref{Pintprop}}{Proof of Proposition 5.4}} \label{Sonedimproofs}

The proof of Proposition~\ref{Pintprop} relies upon the
following two auxiliary results. As in the proof of Proposition
\ref{Ppwprop}, the symbol $'$ is used to identify derivatives.
%
%
\begin{lem}\label{Luto0}
Let Assumption~\ref{Aonedim} hold. Let $\eta\in
C^2(\alpha,\beta)$ be strictly positive and strictly concave. If
(\ref{Eintblambdazero}) holds, then
\[
\inf_{x\in(\alpha,\beta)}\frac{-c(x)\eta''(x)}{2\eta(x)} = 0.
\]
\end{lem}
\begin{pf}
The proof will be given for the integral near $\alpha$ in
(\ref{Eintblambdazero}); the proof near $\beta$ is the same. Let
$\eta
\in C^2(\alpha,\beta)$ be strictly positive and strictly concave. Set
\[
\delta(\eta) = \inf_{x\in(\alpha,\beta)}\frac{-c(x)\eta
''(x)}{2\eta(x)}.\vadjust{\goodbreak}
\]
Let $x_0\in(\alpha,\beta)$ and normalize $\eta$ so
that $\eta(x_0) = 1$. Note that this will not change the value of
$\delta(\eta)$. Using integration by parts, for $\alpha< x <
x_0$,
\[
\eta(x) = 1 - (x_0 - x)\eta'(x_0) -
\int_x^{x_0}(y-x)(-\eta''(y))\,dy
\]
and hence
\[
\int_\alpha^{x_0} \indic_{\{y\geq x\}}(y-x)(-\eta
''(y))\,dy \leq1 +
(\beta
-\alpha)|\eta'(x_0)|.
\]
Fatou's lemma and the concavity of $\eta$ yield
%
%
\begin{equation}\label{Euintbound}
\int_\alpha^{x_0}(y-\alpha)(-\eta''(y))\,dy \leq1 +
(\beta-\alpha)|\eta'(x_0)|.
\end{equation}
The positivity and concavity of $\eta$ yield for $\alpha< \alpha_m < y
< x_0$ that
\[
\eta(y) = \eta\biggl(\frac{y-\alpha_m}{x_0-\alpha_m}x_0 +
\frac{x_0-y}{x_0-\alpha_m}\alpha_m\biggr)\geq\frac{y-\alpha
_m}{x_0-\alpha_m},
\]
and so, letting $\alpha_m\downarrow\alpha$, it follows that $\eta(y)
\geq
(y-\alpha)/(x_0-\alpha)$.
Thus, if $\delta(\eta) > 0$ and (\ref{Eintblambdazero}) holds, then
\begin{eqnarray*}
\int_\alpha^{x_0}(y-\alpha)(-\eta''(y))\,dy &\geq& 2\delta(\eta)\int
_\alpha
^{x_0}\frac{(y-\alpha)\eta(y)}{c(y)}\,dy\\
&\geq& \frac{2\delta(\eta)}{x_0-\alpha}\int_\alpha^{x_0}\frac
{(y-\alpha
)^2}{c(y)}\,dy   =   \infty,
\end{eqnarray*}
which contradicts (\ref{Euintbound}). Thus, $\delta(\eta) =0$ proving
the result.
\end{pf}
%
%
\begin{lem}\label{Letaposrectest}
Let Assumption~\ref{Aonedim} hold. Let $\lambda> 0$ and $\eta\in
H_{\lambda}$ be
such that
%
%
\begin{equation}\label{Eetaproplim}
\lim_{x\downarrow\alpha}\eta(x) = 0 =\lim_{x\uparrow\beta}\eta(x)
\end{equation}
and
%
%
\begin{equation}\label{Eetapropint}
\int_{\alpha}^{\beta}\frac{\eta^2(x)}{c(x)}\,dx < \infty.
\end{equation}
Then, $\blambda= \lambda$ and $\breta= \eta$. The coordinate process
$X$ under
$(\bprob_x)_{x\in(\alpha,\beta)}$ is positive
recurrent, and hence, by Proposition~\ref{Pblambdatest}, $\breta$ is
unique up to multiplication by a positive constant. Furthermore,
$\bprob
\in\bPi_{\mathrm{a.s.}}$.
\end{lem}
\begin{pf}
If~$X$ is recurrent under $(\bprob_x)_{x\in E}$, then from
Proposition~\ref{Pblambdatest}, $\blambda= \lambda$ and $\breta
=\eta$,
and $\breta$ is unique up to multiplication by a positive constant.
Furthermore, by (\ref{Eetapropint}), positive recurrence will
follow with the invariant measure $\tilde{\eta}$ that has density
proportional to $\eta^2 / c$ with respect to Lebesgue measure,
appropriately normalized so $\tilde{\eta}$ is a probability
measure.\vadjust{\goodbreak}

To check recurrence it will be shown that (\ref{Ebprobrecurrent})
holds near $\alpha$; the proof near $\beta$ is the same. Note that,
since $\eta\in H_{\lambda}$ and
(\ref{Eetaproplim}) holds, there exists a~unique $x_0\in(\alpha
,\beta)$ such
that $\eta'(x_0) = 0$. For $\alpha< x < x_0$,
\[
\int_x^{x_0}\frac{2\lambda\eta(y)^2}{c(y)}\,dy =
-\int_x^{x_0}\eta(y)\eta''(y)\,dy = \eta(x)\eta'(x) +
\int_x^{x_0}\eta'(y)^2\,dy.
\]
Thus, as $x\downarrow\alpha$ since $\eta$ is positive and concave, it
must hold
that $\eta(x)\eta'(x) > 0$, and hence, by (\ref{Eetapropint}), it
follows that $\int_{\alpha}^{x_0}\eta'(y)^2\,dy < \infty$. Therefore, by
the concavity of $\eta$ and (\ref{Eetaproplim}),
%
%
\begin{equation}\label{Eetasquareder}
0\leq\liminf_{x\downarrow\alpha}\eta(x)\eta'(x) \leq
\lim_{x\downarrow\alpha}\int_0^x\eta'(y)^2\,dy = 0.
\end{equation}
This implies that, for any $\eps> 0$, there is an $x_{\eps}$ near
$\alpha$
such that for $x\in(\alpha,x_{\eps})$, $\eta^2(x)\leq2\eps
(x-\alpha)$,
or that
\[
\int_{\alpha}^{x_{\eps}}\frac{1}{\eta(y)^2}\,dy\geq\frac{1}{2\eps
}\int
_{\alpha}^{x_{\eps}}\frac{1}{y-\alpha}\,dy
= \infty,
\]
and recurrence follows. It remains to prove that $\bprob\in\bPi
_{\mathrm{a.s.}}$. To this end, it follows from
equations (\ref{Ebproblogrep}) and (\ref{Eliminfequiv}) in the proof of
Proposition~\ref{Phopfprop} that $\bprob\in\bPi_{\mathrm{a.s.}}$ if
\[
\liminf_{t\uparrow\infty}\frac{1}{t}\int_0^t\biggl(\frac
{1}{2}c(X_s)
\biggl(\frac{\eta'(X_s)}{\eta(X_s)}\biggr)^2-\lambda\biggr)\,ds
\geq0 ,\qquad\bprob\mbox{-a.s.}
\]
By the ergodic theorem (\cite{MR1326606}, Theorem 4.9.5) and the monotone
convergence theorem it follows that
\begin{eqnarray*}
&&\liminf_{t\uparrow\infty}\frac{1}{t}\int_0^t\biggl(\frac
{1}{2}c(X_s)
\biggl(\frac{\eta'(X_s)}{\eta(X_s)}\biggr)^2-\lambda\biggr)\,ds\\
&&\qquad\geq\int
_{\alpha
}^{\beta}\biggl(\frac{1}{2}c(y)\biggl(\frac{\eta'(y)}{\eta
(y)}
\biggr)^2-\lambda\biggr)\frac{\eta(y)^2}{c(y)}\,dy, \qquad
\bprob\mbox{-a.s.}
\end{eqnarray*}
Continuing, $\eta\in H_{\lambda}$ implies
\[
\int_{\alpha}^{\beta}\biggl(\frac{1}{2}c(y)\biggl(\frac{\eta
'(y)}{\eta
(y)}\biggr)^2-\lambda\biggr)\frac{\eta(y)^2}{c(y)}\,dy
= \lim_{x\downarrow\alpha}\eta(x)\eta'(x) -
\lim_{x\uparrow\beta}\eta(x)\eta'(x) =0,
\]
where the last equality follows from (\ref{Eetasquareder}) since the same
equality holds near~$\beta$. Thus, $\bprob\in\bPi_{\mathrm{a.s.}}$.
\end{pf}

In what follows, the proof of Proposition~\ref{Pintprop} will be given.

The proof of how (\ref{Eintblambdazero}) implies $\blambda
= 0$ is handled first. By (\ref{Eblambdapw01rep}), it suffices to
consider strictly concave functions $\eta$. However, since
(\ref{Eintblambdazero}) holds, Lemma~\ref{Luto0} applies and hence
$\delta(\eta) = 0$ for all such $\eta$. Thus $\blambda= 0$.

Regarding the assertions when (\ref{Eintblambdapossl}) holds, in light
of Lemma
\ref{Letaposrectest} it suffices to show that (\ref{Eintblambdapossl})
yields the existence of a $\lambda> 0$, $\eta\in H_{\lambda}$ such that\vadjust{\goodbreak}
conditions (\ref{Eetaproplim}) and (\ref{Eetapropint}) are satisfied. To
this end, define the $\sigma$-finite measure $m$ via $m(dx) =
c(x)^{-1}\,dx$. Note that condition (\ref{Eetapropint}) now reads $\eta
\in
L^2((\alpha,\beta),m)$. The desired pair $(\lambda,\eta)$ are the principle
eigenvalue and eigenfunction for the operator $(L,\mathcal{D}(L))$
where $(L\eta)(x) = -(1/2)c(x)\eta''(x)$ for $x \in(\alpha, \beta)$,
and the domain $\mathcal{D}(L)$ consists of functions which vanish at
$\alpha,\beta$ and is constructed so that $(L,\mathcal{D}(L))$ is
self adjoint
in $L^2((\alpha,\beta),m)$. $\mathcal{D}(L)$ is highly dependent
upon the
behavior of $m$ near $\alpha$ and $\beta$. The study of the spectral
properties of such operators falls under the name
\textit{Sturm--Liouville theory}. For a detailed exposition on the topics
covered/results given
below, see~\cite{MR1784426} and~\cite{MR2170950}.

The case when $m((\alpha,\beta)) < \infty$ is called the
\textit{regular case}. Here $\mathcal{D}(L)$ is
given by
%
%
\begin{eqnarray}\label{ELdomaindef}
\mathcal{D}(L) &=& \{\eta\in L^2((\alpha,\beta),m) \such\eta
'\in
AC(\alpha,\beta), \eta(\alpha) = \eta(\beta) = 0,\nonumber\\[-8pt]\\[-8pt]
&&\qquad\hspace*{126pt} c\eta''\in
L^2((\alpha
,\beta),m)\},\nonumber
\end{eqnarray}
and the existence of a $\lambda> 0$, $\eta\in H_{\lambda}\cap
\mathcal
{D}(L)$ is given by
\cite{MR1784426}, Theorem 2.7.4, and~\cite{MR2170950}, Theorem 10.12.1.

Now, suppose that (\ref{Eintblambdapossl}) holds, but for some $a\in
(\alpha,\beta)$ either $m((\alpha,a))=\infty$ or $m((a,\beta)) =
\infty$, or both. These cases are called the \textit{singular cases}. In
each of these three cases there exists a domain $\mathcal{D}(L)\subset
L^2((\alpha,\beta),m)$, similar to that in~(\ref{ELdomaindef}), such
that $(L,\mathcal{D}(L))$ is self adjoint. For explicit formulas for
the domains, see~\cite{MR2170950}, Chapters 7 and 10.

According to~\cite{MR2170950}, Theorem 10.12.1(8), if the spectrum of
$(L,\mathcal{D}(L))$ is discrete and bounded from below, then in fact there
exists a $\lambda> 0$ and $\eta\in H_{\lambda}\cap\mathcal{D}(L)$
such that (\ref{Eetaproplim}) holds [this last fact follows by construction of
$\mathcal{D}(L)$ but also because otherwise $\eta\notin L^2((\alpha
,\beta),m)$].

To prove the spectrum is discrete and bounded from
below, it suffices to treat the case of one regular and one singular endpoint.
This follows using the \textit{spectral decomposition method} on which a
detailed description may be found in~\cite{MR0190800}. Without loss of
generality, consider the case when $\alpha$ is regular and $\beta$ is
singular. Under the transformation $z = f(x) = \int_{\alpha}^{x}
(1/ c(y))\,dy$, $(\alpha,\beta)$ is taken to be $(0,\infty
)$. Set
$\varphi
(z) = \eta(x)$ and
$g(z) = f^{-1}(z)$. Note that $\eta\in
L^2((\alpha,\beta),m)$ is equivalent to $\varphi\in
L^2((0,\infty),\mathrm{Leb}) \equiv L^2(0,\infty)$. Furthermore, the
operator $(M,\mathcal{D}(M))$ defined by
\[
(M\varphi) (z) =
-\frac{1}{2}\biggl(\frac{1}{g'(z)}\varphi'(z)\biggr)',\qquad
\mathcal{D}(M) = \{\varphi\such
\varphi(z)=\eta(x), \eta\in\mathcal{D}(L)\}
\]
is self-adjoint in $L^2(0,\infty)$. Let $N>0$ and
\[
Q_N = \{v\in C_0((N,\infty),\mathbb{C}) \such v\in
AC_{\mathrm{loc}}(0,\infty),
v'\in L^2(0,\infty)\},
\]
where $C_0$ means that $v$ is continuous and compactly supported in
$(N,\infty)$. For $v\in Q_N$, set
\[
I(v,N) = \frac{1}{2}\int_N^{\infty}
\frac{|v'(z)|^2}{g'(z)}\,dz.
\]

According to~\cite{MR614218}, Lemma 4.2, $(M,\mathcal{D}(M))$ has a discrete
spectrum bound\-ed from below if and only if for each $\theta>0$ there
exists an $N>0$ such that
\[
I(v,N)\geq\theta\int_{N}^{\infty}v(z)^2\,dz
\]
for each real valued $v\in Q_N$. To show this, fix $\theta
>0$. For any $N>0$ and $v\in Q_N$,
\[
v(z) = -\int_z^{\infty} v'(\tau)\,d\tau.
\]
Since $\tau= f(g(\tau))$, it follows that $g'(\tau) = c(g(\tau)) > 0$.
By H\"{o}lder's inequality, for real valued $v\in Q_N$,
\[
v(z)^2
\leq\biggl(\int_z^{\infty}\frac{v'(\tau)^2}{g'(\tau)}\,d\tau
\biggr)
\biggl(\int_z^{\infty}g'(\tau)\,d\tau\biggr)\leq
2I(v,N)\bigl(\beta- g(z)\bigr).
\]
Therefore,
\begin{eqnarray*}
\theta\int_N^{\infty}v(z)^2\,dz &\leq& 2\theta I(v,N)\int_N^{\infty
}\bigl(\beta
-g(z)\bigr)\,dz\\
&=& 2\theta I(v,N)\int_{g(N)}^{\beta}\frac{\beta-x}{c(x)}\,dx,
\end{eqnarray*}
where the last equality follows from the substitution $x = g(z)$ or $z =
f(x)$. Since $\lim_{z\uparrow\infty}g(x) = \beta$, by
(\ref{Eintblambdapossl})
\[
2\theta\int_{g(N)}^\beta\frac{\beta-x}{c(x)}\,dx\leq1
\]
for $N$ large enough, yielding the desired result. \qed

\section*{Acknowledgments}

The authors would like to express their gratitude toward the Associate
Editor and four anonymous referees for their very constructive remarks
that helped improve the paper significantly.


%

\printaddresses

\end{document}